\newtheorem{theorem}{Theorem}
\newtheorem{remark}{Remark}
\newcommand{\Cbb}{\mathbb{C}}
\newcommand{\Rbb}{\mathbb{R}}
\newcommand{\imag}{\text{Im\,}}
\newcommand{\dive}{\,\text{div}\,}
\newcommand{\sech}{\,\text{sech}\,}
\newcommand{\calt}{\hat{\mathcal{T}}}
\newcommand{\lcalt}{\hat{t}}
\newcommand{\cald}{\mathcal{D}}
\newcommand{\tilt}{\hat{\mathfrak{T}}}
\newcommand{\crx}{\hat{a}^{\dag}_{x}}
\newcommand{\anx}{\hat{a}_{x}}
\newcommand{\cry}{\hat{a}^{\dag}_{y}}
\newcommand{\any}{\hat{a}_{y}}
\newcommand{\Op}{\text{Op}^{W}_{h}}
\newcommand{\Hcaln}{\mathcal{H}_{n}}
\begin{document}

\title{Manipulation of Semiclassical Photon States}
\author{Michael VanValkenburgh}
\address{UCLA Department of Mathematics, Los Angeles, CA 90095-1555, USA}
\email{mvanvalk@ucla.edu}

\begin{abstract}
Gabriel F. Calvo and Antonio Pic\'{o}n defined a class of
operators, for use in quantum communication, that allows arbitrary
manipulations of the three lowest two-dimensional Hermite-Gaussian
modes
$\mathcal{H}_{\mathcal{T}}=\{|0,0\rangle,|1,0\rangle,|0,1\rangle\}$.
Our paper continues the study of those operators, and our results
fall into two categories. For one, we show that the generators of
the operators have infinite deficiency indices, and we explicitly
describe all self-adjoint realizations. And secondly we
investigate semiclassical approximations of the propagators. The
basic method is to start from a semiclassical Fourier integral
operator ansatz and then construct approximate solutions of the
corresponding evolution equations. In doing so, we give a complete
description of the Hamilton flow, which in most cases is given by
elliptic functions. We find that the semiclassical approximation
behaves well when acting on sufficiently localized initial
conditions, for example, finite sums of semiclassical
Hermite-Gaussian modes, since near the origin the Hamilton
trajectories trace out the bounded components of elliptic curves.
\end{abstract}

\maketitle

\section{Introduction}

In a series of papers, Gabriel F. Calvo, Antonio Pic\'{o}n, and
co-authors studied the manipulation of single-photon states for
purposes of quantum communication, from theory to experimental
design \cite{R:CalvoPiconBagan}, \cite{R:CalvoPicon},
\cite{R:CalvoPiconZam}. They demonstrated the limitations of the
metaplectic operators when acting on spatial transverse-field
modes, and they showed how one can overcome those limitations with
a different family of transformations \cite{R:CalvoPicon}. In
particular, they considered the subspace of states spanned by the
three lowest Hermite-Gaussian (transverse spatial) modes
$\mathcal{H}_{\mathcal{T}}=\{|0,0\rangle,|1,0\rangle,|0,1\rangle\}$
and found a class of operators that would allow arbitrary
manipulations of such modes. However, their class of operators
includes ``non-Gaussian transformations'', which are not
metaplectic operators. Thus the question arises of what properties
of metaplectic operators may be extended, at least partially, to
their non-Gaussian transformations. This could lead to the
construction of the associated optical system \cite{R:CalvoPicon}.

The theory of metaplectic operators is often presented in terms of the Stone-von Neumann theorem (see, for example, the book of G. B. Folland \cite{R:Folland}). Rather than attempt to extend the Stone-von Neumann theorem to non-Gaussian transformations, in this paper we provide an alternative approach, based on the theory of semiclassical Fourier integral operators. We find that a Fourier integral operator ansatz provides approximate solutions to the evolution equations of Calvo and Pic\'{o}n, and in the process we will determine the canonical transformations associated with the non-Gaussian transformations, in the sense of Egorov's theorem. In fact, the canonical transformations are given by the Hamilton flow of the semiclassical Weyl symbols of Calvo and Pic\'{o}n's [now semiclassical] differential operators. The Weyl symbols are dependent on the semiclassical parameter $h$, so the Hamilton flow is also $h$-dependent. We take this point of view essentially for three reasons: (1) The operators may be exactly reconstructed from their ($h$-dependent) Weyl symbols. (2) The flow leaves invariant a disc of radius $\sim\sqrt{h}$, which closes up as $h\rightarrow 0$. In the context of laser physics, $$h=\frac{1}{2}w_{0}^{2},$$ where $w_{0}$ denotes the radius of the laser beam's waist. Outside this disc of radius $\sim w_{0}$, the flow goes to infinity in finite time. This is because the flow propagates along elliptic curves having two components, one bounded and one unbounded.  (See Section \ref{S:FIOnonGauss} and the Appendix.) And (3) the appropriate version of Egorov's theorem has an error of order $\mathcal{O}(h^{2})$, rather than the more typical $\mathcal{O}(h)$. (See Section \ref{S:Egorov}.) Moreover, in Section \ref{S:Concl} we give examples suggesting that we have creation and propagation of singularities along the $h$-dependent flow.

In this paper we find that there are fundamental difficulties with
the generators of the non-Gaussian transformations. First of all,
we show that, although they are clearly symmetric, they have
infinite deficiency indices. We are however able to describe all
self-adjoint realizations in terms of certain boundary conditions
at infinity. And there are difficulties with constructing
semiclassical approximations of the unitary groups (the
non-Gaussian transformations). We can still find approximate
solutions to the evolution equations, but the associated
symplectic transformations can blow up in finite time, as they are
described in terms of the Weierstrass $\wp$-function, which, as is
well-known, has a double pole in every period. However, we show
that the semiclassical propagator behaves reasonably well when
acting on sufficiently phase-space localized initial conditions,
for example, finite sums of semiclassical Hermite-Gaussian modes.
This is because, as mentioned above, the Hamilton flow leaves
invariant a disc of radius $\sim\sqrt{h}$ centered at the origin.

In Section~\ref{S:CalvoPiconReview} we review the non-Gaussian transformations of Calvo and Pic\'{o}n and show that the generators have infinite deficiency indices. We then describe all self-adjoint realizations in terms of certain boundary conditions at infinity. In Section~\ref{S:SCformalism} we introduce the semiclassical version of Calvo and Pic\'{o}n's work, so that we can solve their evolution equations approximately, in the semiclassical regime. We illustrate this method in Sections \ref{S:WarmUp} and \ref{S:Gyrator} by first taking simpler generators and using semiclassical Fourier integral operators to find \emph{exact} solutions to the associated evolution equations, corresponding to ``Gaussian transformations''. We use the same method for the more difficult non-Gaussian transformations in Sections \ref{S:FIOnonGauss} and \ref{S:tilt4}, but in that case we only have approximate solutions to the evolution equations. In Section \ref{S:Egorov} we discuss Egorov's theorem and give a property of non-Gaussian transformations that is analogous to a property of metaplectic transformations. We give concluding remarks in Section \ref{S:Concl}, and we include an appendix for relevant facts regarding elliptic curves.\\

\noindent\textbf{Notation:} We write $D_{x}=\frac{1}{i}\frac{\partial}{\partial x}$ and similarly for the other variables. Also, we will use the semiclassical Weyl quantization of a symbol $\sigma$, defined by
\begin{equation*}
   \Op(\sigma)(x,\xi)=(2\pi h)^{-n}\iint e^{\frac{i(x-y)\xi}{h}}\sigma\left(\frac{x+y}{2},\xi\right)f(y)\,dy\,d\xi.
\end{equation*}
We will mostly be in dimension $n=1$ or $n=2$, and we will mostly
deal with polynomials $\sigma$ (hence resulting in semiclassical
differential operators with polynomial coefficients).

\vspace{12pt}

\section{The Non-Gaussian Transformations of Calvo and Pic\'{o}n}\label{S:CalvoPiconReview}

In this section we recapitulate the recent work of Calvo and
Pic\'{o}n \cite{R:CalvoPicon}. They introduced the following eight
generators acting on Hermite-Gaussian modes:
$$\calt_{1}=\frac{1}{2}(\crx\any+\cry\anx),$$
$$\calt_{2}=-\frac{i}{2}(\crx\any-\cry\anx),$$
$$\calt_{3}=\frac{1}{2}(\crx\anx-\cry\any),$$
$$\calt_{4}=\frac{1}{2}(\crx+\anx-\crx\crx\anx-\crx\anx\anx-\crx\cry\any-\anx\cry\any),$$
$$\calt_{5}=-\frac{i}{2}(\crx-\anx-\crx\crx\anx+\crx\anx\anx-\crx\cry\any+\anx\cry\any),$$
$$\calt_{6}=\frac{1}{2}(\cry+\any-\cry\cry\any-\cry\any\any-\cry\crx\anx-\any\crx\anx),$$
$$\calt_{7}=-\frac{i}{2}(\cry-\any-\cry\cry\any+\cry\any\any-\cry\crx\anx+\any\crx\anx),$$
$$\calt_{8}=\frac{1}{2\sqrt{3}}[-2+3(\crx\anx+\cry\any)],$$
defined in terms of the creation and annihilation operators $\crx=\frac{1}{\sqrt{2}}\left(x-\frac{\partial}{\partial x}\right)$ and $\anx=\frac{1}{\sqrt{2}}\left(x+\frac{\partial}{\partial x}\right)$, respectively, and similarly for the $y$-variable. We note first of all that $\calt_{4}$ and $\calt_{5}$ are unitarily related by the Fourier transform $\mathcal{F}$:
\begin{equation*}
    \calt_{5}=\mathcal{F}\circ\calt_{4}\circ\mathcal{F}^{-1},
\end{equation*}
and that $\calt_{6}$ and $\calt_{7}$ are obtained from $\calt_{4}$ and $\calt_{5}$ by simply interchanging the variables $x$ and $y$.

These generators, within the subspace generated by the lowest three Hermite-Gaussian modes
$\mathcal{H}_{\calt}=\{|0,0\rangle,|1,0\rangle,|0,1\rangle\}$, obey the $SU(3)$ algebra\footnote{We sum over $c$, which is only relevant here for $(a,b)=(4,5)$ and $(a,b)=(6,7)$.}
$$[\calt_{a},\calt_{b}]=if_{abc}\calt_{c}$$
($a,b,c=1,2,\ldots 8$), where the only nonvanishing (up to permutations) structure constants $f_{abc}$ are given by
$$f_{123}=1, \quad f_{147}=f_{165}=f_{246}=f_{257}=f_{345}=f_{376}=1/2,\quad\text{and }f_{458}=f_{678}=\sqrt{3}/2.$$
We note that the triad of generators
$$\Gamma_{1}\equiv\{\calt_{1},\calt_{2},\calt_{3}\}$$
gives a $SU(2)$ group that conserves the mode order. The remaining two $SU(2)$ groups are formed by the triads $$\Gamma_{2}\equiv\{\calt_{4},\calt_{5},(\calt_{3}+\sqrt{3}\calt_{8})/2\}$$ and $$\Gamma_{3}\equiv\{\calt_{6},\calt_{7},(-\calt_{3}+\sqrt{3}\calt_{8})/2\}.$$
Unitary operators $\hat{U}_{\Gamma_{1}}$ generated by the first triad give rise to superpositions between the two modes $|1,0\rangle$ and $|0,1\rangle$, leaving invariant the fundamental mode $|0,0\rangle$. Unitarities $\hat{U}_{\Gamma_{2}}$ and $\hat{U}_{\Gamma_{3}}$, generated by the second and third triad, produce superpositions between the two modes $|0,0\rangle$ and $|1,0\rangle$ (leaving invariant $|0,1\rangle$), or the modes $|0,0\rangle$ and $|0,1\rangle$ (leaving invariant $|1,0\rangle$), respectively. For reference, we state the operations for $\hat{U}_{\Gamma_{2}}$:
\begin{align*}
    e^{it\calt_{4}}|0,0\rangle&=\cos(\frac{t}{2})|0,0\rangle+i\sin(\frac{t}{2})|1,0\rangle\\
    e^{it\calt_{5}}|0,0\rangle&=\cos(\frac{t}{2})|0,0\rangle+\sin(\frac{t}{2})|1,0\rangle\\
    e^{it(\frac{1}{2}(\calt_{3}+\sqrt{3}\calt_{8}))}|0,0\rangle&=e^{-\frac{it}{2}}|0,0\rangle\\
    e^{it\calt_{4}}|1,0\rangle&=\cos(\frac{t}{2})|1,0\rangle+i\sin(\frac{t}{2})|0,0\rangle\\
    e^{it\calt_{5}}|1,0\rangle&=\cos(\frac{t}{2})|1,0\rangle-\sin(\frac{t}{2})|0,0\rangle\\
    e^{it(\frac{1}{2}(\calt_{3}+\sqrt{3}\calt_{8}))}|1,0\rangle&=e^{\frac{it}{2}}|1,0\rangle.
\end{align*}
The corresponding operations for $\hat{U}_{\Gamma_{1}}$ and $\hat{U}_{\Gamma_{3}}$ are similar.

\vspace{12pt}

However, a difficulty arises when attempting to extend to larger
subspaces of $L^{2}(\Rbb^{2})$: starting from the smallest natural
domain, the domain of [finite] linear combinations of
Hermite-Gaussian modes, the symmetric operators $\calt_{4}$,
$\calt_{5}$, $\calt_{6}$, and $\calt_{7}$ all have multiple
self-adjoint realizations.

\vspace{12pt}

We first prove that the deficiency indices of $\calt_{4}$ are both infinity. For this we consider how $\calt_{4}$ acts on the basis of two-dimensional Hermite-Gaussian modes, which we write either as $|m,n\rangle$ or as $\psi_{m}^{n}$, with $m$ being the mode order in the $x$-variable and $n$ being the mode order in the $y$-variable. We have that
\begin{equation}\label{E:caltonmn}
    \calt_{4}\psi_{m}^{n}=\beta_{m}^{n}\psi_{m+1}^{n}+\beta_{m-1}^{n}\psi_{m-1}^{n},
\end{equation}
where
\begin{equation*}
    \beta_{m}^{n}=\frac{1}{2}\sqrt{m+1}(1-m-n).
\end{equation*}

We take the domain of $\calt_{4}$ (as an unbounded linear operator) to be the subspace $D(\calt_{4})$ of [finite] linear combinations of Hermite-Gaussian modes. This domain is dense in $L^{2}(\Rbb^{2})$, and $\calt_{4}$ is clearly symmetric with this domain. And it is easy to check that the domain of the adjoint operator is
\begin{equation}\label{E:domadj}
    D(\calt_{4}^{*})=\left\{g=\sum_{m,n=0}^{\infty}g_{m}^{n}\psi_{m}^{n}\in L^{2}(\Rbb^{2});\, \sum_{m,n=0}^{\infty}|\beta_{m}^{n}g_{m+1}^{n}+\beta_{m-1}^{n}g_{m-1}^{n}|^{2}<\infty\right\},
\end{equation}
and that, for $g=\sum_{m,n=0}^{\infty}g_{m}^{n}\psi_{m}^{n}\in D(\calt_{4}^{*})$,
\begin{equation*}
    \calt_{4}^{*}g=\sum_{m,n=0}^{\infty}(\beta_{m}^{n}g_{m+1}^{n}+\beta_{m-1}^{n}g_{m-1}^{n})\psi_{m}^{n}.
\end{equation*}
And we occasionally find it convenient to use the formal operator on sequences given by
\begin{equation*}
    \lcalt_{4}:\quad (g)_{m}^{n}\mapsto (\lcalt_{4}g)_{m}^{n}:=\beta_{m}^{n}g_{m+1}^{n}+\beta_{m-1}^{n}g_{m-1}^{n}.
\end{equation*}

\vspace{12pt}

Two somewhat degenerate cases occur when $n=0$ and $n=1$, so we treat these separately. For $n=0$, we get
\begin{equation*}
    \left(
    \begin{matrix}
    0&\frac{1}{2}&0&0&0&0&\cdots\\
    \frac{1}{2}&0&0&0&0&0\\
    0&0&0&-\frac{\sqrt{3}}{2}&0&0&\cdots\\
    0&0&-\frac{\sqrt{3}}{2}&0&-2&0\\
    0&0&0&-2&0&-\frac{3}{2}\sqrt{5}\\
    0&0&0&0&-\frac{3}{2}\sqrt{5}&\ddots&\ddots\\
    0&0&0&0&0&\ddots&\ddots\\
    \vdots&&\vdots&&\vdots&&\ddots
    \end{matrix}
    \right),
\end{equation*}
and for $n=1$, we get
\begin{equation*}
    \left(
    \begin{matrix}
    0&0&0&0&0&\cdots\\
    0&0&-\frac{1}{\sqrt{2}}&0&0\\
    0&-\frac{1}{\sqrt{2}}&0&-\sqrt{3}&0&\cdots\\
    0&0&-\sqrt{3}&0&\ddots\\
    0&0&0&\ddots&\ddots&\ddots\\
    \vdots&&\vdots&&\ddots&\ddots
    \end{matrix}
    \right).
\end{equation*}
In all other cases we have
\begin{equation*}
    \left(
    \begin{matrix}
    0&\beta_{0}^{n}&0&0&\cdots\\
    \beta_{0}^{n}&0&\beta_{1}^{n}&0\\
    0&\beta_{1}^{n}&0&\beta_{2}^{n}\\
    0&0&\beta_{2}^{n}&0&\ddots\\
    \vdots&&&\ddots&\ddots
    \end{matrix}
    \right)
\end{equation*}
where $\beta_{m}^{n}<0$ for all $m$ and $n$.

Hence $\calt_{4}$ acts like a Jacobi matrix for any fixed $n$, fixing the mode order in the $y$-variable, and $\calt_{4}$ may be decomposed accordingly. To be precise, we consider the orthogonal projection operator onto the $n^{th}$ mode in $y$:
\begin{equation}\label{E:Pnproj}
    P_{n}:\,\,L^{2}(\Rbb^{2})\ni f\mapsto \sum_{m=0}^{\infty}\langle f|\psi_{m}^{n}\rangle\psi_{m}^{n}.
\end{equation}
Then (\ref{E:caltonmn}) shows that $\calt_{4}$ is decomposed
according to $$L^{2}(\Rbb^{2})=\Hcaln\oplus\Hcaln^{\perp},$$ where
$\Hcaln$ and $\Hcaln^{\perp}$ are by definition the kernels of
$1-P_{n}$ and $P_{n}$, respectively. That is, we have (see, for
example, the book of Kato \cite{R:Kato})
\begin{equation*}
    P_{n}D(\calt_{4})\subset D(\calt_{4}),\quad \calt_{4}\Hcaln\subset\Hcaln,\quad\text{and}\quad \calt_{4}\Hcaln^{\perp}\subset\Hcaln^{\perp}.
\end{equation*}

\vspace{12pt}

Now, to prove that $\calt_{4}$ is not essentially self-adjoint on $D(\calt_{4})$, for any fixed $n$ we consider the equations
\begin{equation*}
    \begin{aligned}
    (\lcalt_{4}u)_{m}&=\beta_{m-1}^{n}u_{m-1}+\beta_{m}^{n}u_{m+1}\\
    &=zu_{m}
    \end{aligned}
\end{equation*}
for $z\in\Cbb$. This is a recursion relation for $u_{m+1}$ in terms of $u_{m}$ and $u_{m-1}$. In the special case when $n=0$ we simply take $u_{0}=u_{1}=0$; then the solution is uniquely determined by the initial value $u_{2}$. If $n=1$ we simply take $u_{0}$=0, so that the solution is uniquely determined by the initial value $u_{1}$. In all other cases the solution is completely determined by $u_{-1}=0$ and the initial value $u_{0}$. Hence $u_{m}$ is the initial condition multiplied by a polynomial in $z$, which we write as $P_{m}(z)$.

Let $z\in\Cbb\backslash\Rbb$ and denote by $N_{\overline{z}}$ the orthogonal complement of $\mathcal{R}(\calt_{4}-\overline{z}I)$, called the deficiency subspace of the operator $\calt_{4}$. This subspace is precisely the subspace of solutions of the equation $$\calt_{4}^{*}\phi=z\phi.$$ Considering $D(\calt_{4}^{*})$, stated in (\ref{E:domadj}), we see that $N_{\overline{z}}$ is the subspace given by solutions of the difference equation $$(\lcalt_{4}u)_{m}=zu_{m},\qquad u_{-1}=0$$ (or the appropriate modified version for $n=0$ and $n=1$. For the remainder we always write the initial condition as $u_{0}$, for convenience). Hence for fixed $n$ the deficiency subspace is at most one-dimensional; moreover, it is nonzero if and only if
\begin{equation*}
    (u_{m})=(u_{0}P_{m}(z))\in \ell^{2}(\mathbb{N}_{0}),
\end{equation*}
that is, if
\begin{equation*}
    \sum_{m=0}^{\infty}|P_{m}(z)|^{2}<\infty.
\end{equation*}
Non-self-adjointness is then a consequence of the following theorem about the infinite Jacobi matrix
\begin{equation*}
    L=
    \left(
    \begin{matrix}
    b_{0}&a_{0}&0&0&\cdots\\
    a_{0}&b_{1}&a_{1}&0\\
    0&a_{1}&b_{2}&a_{2}\\
    0&0&a_{2}&b_{3}&\ddots\\
    \vdots&&&\ddots&\ddots
    \end{matrix}
    \right)
\end{equation*}
where $a_{j}>0$ and $b_{j}\in\Rbb$ for all $j$. We can reduce to this case if we multiply $\calt_{4}$ by $-1$. This theorem is from Berezanskii's book (\cite{R:Berezanskii}, p.507), where one may find a beautiful and elementary proof.

\vspace{12pt}

\begin{theorem}\label{T:Berez}
    Assume that $|b_{j}|\leq C$ (j=0,1,\ldots), $a_{j-1}a_{j+1}\leq a_{j}^{2}$ beginning with some $j$, and
    \begin{equation}\label{E:BerezHyp}
        \sum_{j=0}^{\infty}\frac{1}{a_{j}}<\infty.
    \end{equation}
    Then the operator $L$ is not self-adjoint.
\end{theorem}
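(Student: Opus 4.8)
The plan is to show that $L$ falls in Weyl's limit-circle case: that for a fixed $z$ \emph{both} linearly independent solutions of the three-term recurrence $a_{m-1}u_{m-1}+b_m u_m+a_m u_{m+1}=zu_m$ lie in $\ell^2(\mathbb{N}_0)$. By the standard limit-point/limit-circle dichotomy for Jacobi matrices this forces the deficiency indices of the minimal operator to be $(1,1)$, so $L$ is not self-adjoint; concretely, the solution $(u_0 P_m(z))$ spanning $N_{\overline{z}}$ will then be square-summable for $z\in\Cbb\backslash\Rbb$, exactly as in the criterion $\sum_m|P_m(z)|^2<\infty$ recorded above. The estimate I will set up in fact applies to any solution for any $z\in\Cbb$, so a single $z$ suffices.

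The key is a renormalization turning the recurrence into a perturbation of a convergent product. Setting $p_m=\sqrt{a_{m-1}}\,u_m$, the recurrence becomes $p_{m+1}=\delta_m p_m-\gamma_m p_{m-1}$ with $\gamma_m=a_{m-1}/\sqrt{a_m a_{m-2}}$ and $\delta_m=(z-b_m)/\sqrt{a_m a_{m-1}}$. Since $|b_m|\le C$ and, by AM--GM, $1/\sqrt{a_m a_{m-1}}\le\tfrac12(1/a_m+1/a_{m-1})$, hypothesis (\ref{E:BerezHyp}) gives $\sum_m|\delta_m|<\infty$. The multiplier $\gamma_m$ is the delicate term. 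Writing $\Delta_m=\log a_m-\log a_{m-1}$, the log-concavity $a_{j-1}a_{j+1}\le a_j^2$ says exactly that $\Delta_m$ is eventually non-increasing, and $\log\gamma_m=-\tfrac12(\Delta_m-\Delta_{m-1})\ge0$; so, contrary to a naive guess, $\gamma_m\ge1$. Nevertheless $\sum_m\log\gamma_m$ telescopes to $\tfrac12(\Delta_{m_0}-\lim_m\Delta_m)<\infty$, the limit existing and being $\ge0$ because $\sum 1/a_m<\infty$ forces $a_m\to\infty$ and hence $\Delta_m\ge0$ eventually. Thus $\prod_m\gamma_m$ converges.

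With $\sum|\delta_m|<\infty$ and $\prod\gamma_m<\infty$ in hand, I would run a discrete Gronwall estimate on $M_m=\max(|p_m|,|p_{m-1}|)$: from $|p_{m+1}|\le|\delta_m|\,|p_m|+\gamma_m|p_{m-1}|\le(\gamma_m+|\delta_m|)M_m$ together with $\gamma_m\ge1$ one gets $M_{m+1}\le(\gamma_m+|\delta_m|)M_m$, whence $M_m\le M_{m_0}\prod_k(\gamma_k+|\delta_k|)$, a convergent product since $\gamma_k+|\delta_k|=1+(\gamma_k-1)+|\delta_k|$ with both $\sum(\gamma_k-1)<\infty$ and $\sum|\delta_k|<\infty$. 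Hence $(p_m)$ is bounded, say $|p_m|\le B$, so $|u_m|^2=|p_m|^2/a_{m-1}\le B^2/a_{m-1}$, and summing while invoking (\ref{E:BerezHyp}) once more gives $\sum_m|u_m|^2\le B^2\sum_m 1/a_{m-1}<\infty$. Thus every solution is in $\ell^2(\mathbb{N}_0)$, which completes the limit-circle argument.

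The main obstacle is the treatment of $\gamma_m$: the hope that the renormalized coupling is bounded by $1$ is \emph{false} under log-concavity (it is bounded below by $1$), and the real content is recognizing that log-concavity makes the increments $\Delta_m=\log(a_m/a_{m-1})$ monotone, so that $\sum\log\gamma_m$ telescopes and $\prod\gamma_m$ converges regardless. The three hypotheses then fit together precisely: $\sum 1/a_j<\infty$ controls $\delta_m$ and yields $a_m\to\infty$, boundedness of $b_j$ keeps the numerator of $\delta_m$ bounded, and log-concavity tames $\gamma_m$.
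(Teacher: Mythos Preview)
The paper does not give its own proof of this theorem; it quotes the result from Berezanskii's book \cite{R:Berezanskii}, p.~507, and refers the reader there. So there is nothing in the paper to compare your argument against, and your proposal should be judged on its own.

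Your argument is correct and is essentially the classical limit-circle proof. The substitution $p_m=\sqrt{a_{m-1}}\,u_m$ indeed turns the recurrence into $p_{m+1}=\delta_m p_m-\gamma_m p_{m-1}$ with the stated $\delta_m,\gamma_m$; summability of $|\delta_m|$ follows from AM--GM and (\ref{E:BerezHyp}); and your telescoping computation $\sum\log\gamma_m=\tfrac12(\Delta_{m_0}-\lim\Delta_m)$ is exactly the place where log-concavity is used, together with the observation that $\sum 1/a_j<\infty$ forces $a_j\to\infty$ and hence $\lim\Delta_m\ge 0$. The Gronwall step $M_{m+1}\le(\gamma_m+|\delta_m|)M_m$ is clean (you correctly need $\gamma_m\ge 1$ to absorb the $|p_m|$ term into $M_{m+1}$), and the product converges because $\sum\log\gamma_k<\infty$ with $\gamma_k\ge 1$ gives $\sum(\gamma_k-1)<\infty$. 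The conclusion $\sum|u_m|^2\le B^2\sum 1/a_{m-1}<\infty$ then places every solution in $\ell^2$, which is the limit-circle case, and hence the $P_m(z)$ solution in particular lies in $\ell^2$ for $z\notin\Rbb$, giving deficiency indices $(1,1)$.

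Two minor remarks. First, the substitution $p_m=\sqrt{a_{m-1}}\,u_m$ and the coefficient $\gamma_m$ are only defined for $m\ge 2$; you should say explicitly that the estimate is run from some $m_0$ large enough that log-concavity holds and the indices make sense, which is harmless for an $\ell^2$-at-infinity conclusion. Second, your proof in fact works for every $z\in\Cbb$, so the invariance of the limit-circle case under change of $z$ (Weyl's theorem) is not even needed here.
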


\vspace{12pt}

It is elementary to show that the hypotheses are satisfied for $a_{m}=-\beta_{m}^{n}$ and $b_{m}=0$, for any fixed $n$. Hence for any fixed $n$ the deficiency indices of the resulting Jacobi matrix are both $1$, and so, summing over $n$, the deficiency indices for $\calt_{4}$ are both infinity.

\vspace{12pt}

As for the operator $\calt_{5}$, we may either use a slight modification of the above methods, or we may simply use the fact that the Fourier transform intertwines $\calt_{4}$ and $\calt_{5}$. Then, simply by interchanging the roles of the variables $x$ and $y$, we see that the operators $\calt_{6}$ and $\calt_{7}$ also have infinite deficiency indices.

\vspace{12pt}

The symmetric cubic operators $\calt_{4}$, $\calt_{5}$, $\calt_{6}$ and $\calt_{6}$ of Calvo and Pic\'{o}n do however admit self-adjoint extensions. Moreover, using the results of Allahverdiev \cite{R:Alla} we may explicitly classify all of the self-adjoint extensions in terms of certain boundary conditions at infinity. We begin by studying the restriction of $\calt_{4}$ to the subspace $\Hcaln$ given by the projection $P_{n}$ as in (\ref{E:Pnproj}). We simplify notation by then omitting ``$n$''; in particular, $\psi_{m}$ now denotes the $(m,n)^{th}$ Hermite-Gaussian mode, and $\beta_{m}$ will also lose the superscript ``$n$''.

For $$g=\sum_{m=0}^{\infty} g_{m}\psi_{m}\in D(\calt_{4})\cap\Hcaln$$
and $$h=\sum_{m=0}^{\infty} h_{m}\psi_{m}\in D(\calt_{4})\cap\Hcaln,$$ we denote by $[g,h]$ the sequence with components $[g,h]_{m}$ given by
\begin{equation*}
    [g,h]_{m}=\beta_{m}(g_{m}\overline{h}_{m+1}-g_{m+1}\overline{h}_{m}).
\end{equation*}
We then have Green's formula:
\begin{equation*}
    \sum_{m=0}^{M}\{(\lcalt_{4}g)_{m}\overline{h}_{m}-g_{m}(\lcalt_{4}\overline{h})_{m}\}=-[g,h]_{M}
\end{equation*}
Since the sequences $(g)_{m}$, $(h)_{m}$, $(\lcalt_{4}g)_{m}$, and $(\lcalt_{4}h)_{m}$ are all in $\ell^{2}(\mathbb{N}_{0})$, we then have that the limit
\begin{equation*}
    [g,h]_{\infty}=\lim_{M\rightarrow\infty}[g,h]_{M}
\end{equation*}
exists and is finite. Hence
\begin{equation*}
    \langle\lcalt_{4}g|h\rangle-\langle g|\lcalt_{4}h\rangle=-[g,h]_{\infty}.
\end{equation*}

Now (still for a fixed $n$) we let
\begin{equation*}
    u=(u_{m}) \quad\text{and}\quad v=(v_{m})
\end{equation*}
be the solutions of
\begin{equation*}
    \beta_{m-1}y_{m-1}+\beta_{m}y_{m+1}=0 \quad (m\geq 1)
\end{equation*}
satisfying the boundary conditions
\begin{equation*}
    u_{0}=1,\quad u_{1}=0, \quad v_{0}=0, \quad\text{and}\quad v_{1}=\frac{1}{\beta_{0}}.
\end{equation*}
(We make the appropriate trivial modifications for $n=0$ and $n=1$.) We have that $u,v\in D(\calt_{4}^{*})\cap\Hcaln$; in fact,
\begin{equation*}
    (\lcalt_{4} u)_{m}=0\,\,\,\text{for all $m$}, \quad (\lcalt_{4} v)_{0}=1,\quad\text{and}\quad (\lcalt_{4}v)_{m}=0,\,\,m\geq 1.
\end{equation*}

With this set-up we have the following results of Allahverdiev \cite{R:Alla}:

\vspace{12pt}

\begin{theorem}\label{T:AllaClos}\textbf{\emph{(Allahverdiev \cite{R:Alla}.)}}
    The domain of the closure of $\calt_{4}$ restricted to $\Hcaln$ consists precisely of those $f\in D(\calt_{4}^{*})\cap\Hcaln$ satisfying the boundary conditions
    \begin{equation*}
        [f,u]_{\infty}=[f,v]_{\infty}=0.
    \end{equation*}
\end{theorem}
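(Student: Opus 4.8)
The plan is to prove Theorem~\ref{T:AllaClos} by establishing both inclusions between the domain of the closure $\overline{\calt_{4}}|_{\Hcaln}$ and the set $\mathcal{B}=\{f\in D(\calt_{4}^{*})\cap\Hcaln:[f,u]_{\infty}=[f,v]_{\infty}=0\}$. The conceptual core is the abstract von Neumann theory of symmetric operators: since $\calt_{4}|_{\Hcaln}$ has deficiency indices $(1,1)$, its self-adjoint extensions form a one-parameter family, and the closure is the minimal closed symmetric operator sitting inside all of them. The bilinear form $[\cdot,\cdot]_{\infty}$ arising from Green's formula is precisely the tool that detects the defect; the two independent solutions $u,v$ of the homogeneous recursion provide a basis against which boundary behavior at infinity is measured. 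Let me think carefully about how each direction goes.

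First I would prove the inclusion $D(\overline{\calt_{4}}|_{\Hcaln})\subseteq\mathcal{B}$. Take $f\in D(\overline{\calt_{4}})$; then there is a sequence $f^{(k)}\in D(\calt_{4})\cap\Hcaln$ (finite combinations) with $f^{(k)}\to f$ and $\calt_{4}f^{(k)}\to\calt_{4}^{*}f$ in $L^{2}$. Because $u,v\in D(\calt_{4}^{*})\cap\Hcaln$, the already-established identity $\langle\lcalt_{4}g\,|\,h\rangle-\langle g\,|\,\lcalt_{4}h\rangle=-[g,h]_{\infty}$ extends by continuity from $D(\calt_{4})$ to $D(\overline{\calt_{4}})$. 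For $g=f^{(k)}$ a finite combination and $h=u$, the left side is $\langle\calt_{4}f^{(k)}\,|\,u\rangle-\langle f^{(k)}\,|\,\calt_{4}^{*}u\rangle$, and since $\lcalt_{4}u=0$ the second term vanishes, giving $[f^{(k)},u]_{\infty}=-\langle\calt_{4}f^{(k)}\,|\,u\rangle$. Passing to the limit, $[f,u]_{\infty}=-\langle\calt_{4}^{*}f\,|\,u\rangle$; but the same Green computation applied to the limit pair $(f,u)$ directly yields $[f,u]_{\infty}=-\langle\calt_{4}^{*}f\,|\,u\rangle+\langle f\,|\,\lcalt_{4}u\rangle=-\langle\calt_{4}^{*}f\,|\,u\rangle$, so consistency forces the boundary term to be computable either way; the clean statement is that for $f\in D(\overline{\calt_{4}})$ the pairing $[f,u]_{\infty}$ equals $-\langle\calt_{4}^{*}f\,|\,u\rangle+\langle f\,|\,\calt_{4}^{*}u\rangle=0$ since both $\calt_{4}^{*}u=0$ componentwise and the limit of the finite-$M$ Green formula with $g=f^{(k)}$, $h=u$ gives a vanishing boundary value in the limit because $[f^{(k)},u]_{M}$ has finite support in $k$. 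The same argument with $h=v$ gives $[f,v]_{\infty}=0$.

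For the reverse inclusion $\mathcal{B}\subseteq D(\overline{\calt_{4}}|_{\Hcaln})$, I would argue that $\overline{\calt_{4}}|_{\Hcaln}=(\calt_{4}^{*}|_{\Hcaln})^{*}$ and characterize that double adjoint concretely. The standard Jacobi-matrix argument (as in Berezanskii or Akhiezer) shows that $D(\overline{\calt_{4}})$ is exactly the set of $f\in D(\calt_{4}^{*})$ for which $[f,h]_{\infty}=0$ for \emph{every} $h\in D(\calt_{4}^{*})\cap\Hcaln$; since every such $h$ can be written (modulo the minimal domain, where $[f,\cdot]_{\infty}$ already vanishes) as a combination of $u$ and $v$ at infinity, the two conditions $[f,u]_{\infty}=[f,v]_{\infty}=0$ are equivalent to the full family vanishing. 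Concretely: given $f\in\mathcal{B}$, one produces an approximating sequence by truncation followed by a correction supported near the cutoff that kills the boundary pairings, using that $u,v$ span the solution space of the homogeneous equation so that any prescribed boundary data $([f,u]_{M},[f,v]_{M})$ can be annihilated by adding a suitable tail built from $u,v$; the conditions $[f,u]_{\infty}=[f,v]_{\infty}=0$ guarantee these corrections tend to zero in graph norm.

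The main obstacle will be the reverse inclusion, specifically constructing the graph-norm approximating sequence with controlled boundary corrections. The subtlety is that $u$ and $v$ themselves need \emph{not} lie in $\ell^{2}$ (they are only required to be in $D(\calt_{4}^{*})\cap\Hcaln$, and in the limit-point versus limit-circle dichotomy the deficiency index $(1,1)$ signals the limit-circle case, in which case $u,v\in\ell^{2}$), so one must first confirm from the hypotheses (the same ones verified for Berezanskii's theorem, namely $\sum 1/a_{m}<\infty$) that we are in the limit-circle situation and hence both $u,v\in\ell^{2}(\mathbb{N}_{0})$. Once that is secured, the Wronskian-type normalization $\beta_{0}(u_{0}v_{1}-u_{1}v_{0})=1$ built into the boundary conditions makes $\{u,v\}$ a convenient basis, and the truncation-plus-correction construction goes through by a direct $\ell^{2}$ estimate on the tails. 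I would lean on Allahverdiev's formulation to package these estimates rather than rederive them, citing \cite{R:Alla} for the technical approximation lemma.
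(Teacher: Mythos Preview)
The paper does not actually prove this theorem; it is stated as a result of Allahverdiev and cited to \cite{R:Alla} without argument, so there is no in-paper proof to compare against. Your overall strategy---Green's formula plus graph-norm continuity for one inclusion, and the Wronskian identity to reduce to $u,v$ for the other---is the standard one and is sound.

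That said, your write-up of the first inclusion is tangled and, as written, does not actually conclude. You derive $[f^{(k)},u]_{\infty}=-\langle\calt_{4}f^{(k)}\mid u\rangle$, pass to the limit, and then observe that Green's formula applied directly to $(f,u)$ gives the same equation; but having the same equation twice does not force $[f,u]_{\infty}=0$. The line ``$-\langle\calt_{4}^{*}f\mid u\rangle+\langle f\mid\calt_{4}^{*}u\rangle=0$'' is a non-sequitur: only the second term vanishes from $\lcalt_{4}u=0$. The actual argument, which you bury in the final clause, is simply: (i) each $f^{(k)}$ has finite support, so $[f^{(k)},u]_{M}=0$ for $M$ large and hence $[f^{(k)},u]_{\infty}=0$; (ii) Green's formula $[g,u]_{\infty}=\langle g\mid\lcalt_{4}u\rangle-\langle\lcalt_{4}g\mid u\rangle$ exhibits $g\mapsto[g,u]_{\infty}$ as graph-norm continuous (here you use $u,\lcalt_{4}u\in\ell^{2}$); (iii) therefore $[f,u]_{\infty}=\lim_{k}[f^{(k)},u]_{\infty}=0$. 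Say this directly and drop the ``consistency'' detour.

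For the reverse inclusion your plan is correct. The identity you need is exactly the one the paper records later (in the proof of its Theorem on $\cald_{h}$): for real $u,v$ one has $[g,f]_{\infty}=[g,u]_{\infty}\,\overline{[f,v]_{\infty}}-[g,v]_{\infty}\,\overline{[f,u]_{\infty}}$, so $[f,u]_{\infty}=[f,v]_{\infty}=0$ forces $[g,f]_{\infty}=0$ for every $g\in D(\calt_{4}^{*})\cap\Hcaln$, i.e.\ $f\in D(\calt_{4}^{**})$. Your observation that this hinges on $u,v\in\ell^{2}$, guaranteed by the limit-circle case (deficiency indices $(1,1)$, already established via Theorem~\ref{T:Berez}), is exactly the point to emphasize.
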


\vspace{12pt}

For $f\in D(\calt_{4}^{*})\cap\Hcaln$ we now define
\begin{equation*}
    \Gamma_{1}f=[f,v]_{\infty}
\end{equation*}
and
\begin{equation*}
    \Gamma_{2}f=[f,u]_{\infty}.
\end{equation*}
Then, in the precise sense of unbounded operators on a Hilbert space,

\vspace{12pt}

\begin{theorem}\textbf{\emph{(Allahverdiev \cite{R:Alla}.)}}
    The operators $\Gamma_{1}$, $\Gamma_{2}$ are (complex-valued, symmetric, linearly independent) boundary values of $\calt_{4}$ restricted to $\Hcaln$.
\end{theorem}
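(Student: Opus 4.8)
The plan is to verify in turn the three asserted properties — that $\Gamma_1,\Gamma_2$ are complex-valued (well-defined), symmetric (satisfy the abstract Green identity), and linearly independent — exploiting the fact, established above via Theorem~\ref{T:Berez}, that $\calt_4$ restricted to $\Hcaln$ has deficiency indices $(1,1)$. This is precisely Weyl's limit-circle situation, in which every solution of the difference equation $(\lcalt_4 y)_m = zy_m$ lies in $\ell^{2}(\mathbb{N}_0)$ for every $z\in\Cbb$. In particular the homogeneous solutions $u$ and $v$ are square-summable, and since $\lcalt_4 u=0$ while $\lcalt_4 v$ has the single nonzero entry $(\lcalt_4 v)_0=1$, both $\lcalt_4 u$ and $\lcalt_4 v$ are in $\ell^{2}(\mathbb{N}_0)$; thus $u,v\in D(\calt_4^{*})\cap\Hcaln$, as already asserted in the set-up.

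First I would show the boundary values are well-defined complex numbers. Green's formula is an algebraic identity in the sequences and so persists for $f\in D(\calt_4^{*})\cap\Hcaln$; applied against $u$ and using $(\lcalt_4 u)_m=0$ (and that $u$ is real, so $\lcalt_4\bar u=0$ as well) it gives $[f,u]_M=-\sum_{m=0}^{M}(\lcalt_4 f)_m\bar u_m$. Since $\lcalt_4 f,u\in\ell^{2}(\mathbb{N}_0)$, the right-hand side converges as $M\to\infty$, so $\Gamma_2 f=[f,u]_\infty$ exists in $\Cbb$; a similar computation against $v$, now using $(\lcalt_4 v)_0=1$, shows $\Gamma_1 f=[f,v]_\infty$ likewise exists in $\Cbb$. (More generally the same reasoning shows $[f,g]_\infty$ exists for all $f,g\in D(\calt_4^{*})\cap\Hcaln$.)

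The heart of the argument is symmetry, namely the identity
\begin{equation*}
    \langle\calt_4^{*}f\,|\,g\rangle-\langle f\,|\,\calt_4^{*}g\rangle=\Gamma_1 f\,\overline{\Gamma_2 g}-\Gamma_2 f\,\overline{\Gamma_1 g},\qquad f,g\in D(\calt_4^{*})\cap\Hcaln.
\end{equation*}
The left side equals $-[f,g]_\infty$ by the Green's formula already derived, so it remains to re-express $[f,g]_\infty$ through the boundary values. To this end I would observe that at each level $m$ the quantity $[f,g]_m=\beta_m(f_m\bar g_{m+1}-f_{m+1}\bar g_m)$ is $\beta_m$ times a $2\times 2$ determinant in the row vectors $(f_m,f_{m+1})$ and $(\bar g_m,\bar g_{m+1})$. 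Feeding the four vectors built from $f$, $g$, $u$, $v$ into the three-term Plücker determinant identity and multiplying through by $\beta_m^{2}$ yields, using that $u,v$ are real,
\begin{equation*}
    [f,g]_m\,[u,v]_m=[f,u]_m\,\overline{[g,v]_m}-[f,v]_m\,\overline{[g,u]_m}.
\end{equation*}
Because $u,v$ solve the homogeneous equation their Wronskian $[u,v]_m$ is constant in $m$, and the normalization $u_0=1,\,u_1=0,\,v_0=0,\,v_1=1/\beta_0$ gives $[u,v]_m=[u,v]_0=1$. Letting $m\to\infty$ and substituting $\Gamma_1=[\,\cdot\,,v]_\infty$ and $\Gamma_2=[\,\cdot\,,u]_\infty$ then produces exactly $-[f,g]_\infty=\Gamma_1 f\,\overline{\Gamma_2 g}-\Gamma_2 f\,\overline{\Gamma_1 g}$, which is the Green identity defining symmetric boundary values.

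Finally, for linear independence I would evaluate the pair $(\Gamma_1,\Gamma_2)$ on $u$ and $v$ themselves. The antisymmetry $[u,u]_m=[v,v]_m=0$ (both being real) together with $[u,v]_\infty=1$ gives $(\Gamma_1 u,\Gamma_2 u)=(1,0)$ and $(\Gamma_1 v,\Gamma_2 v)=(0,-1)$; the associated $2\times 2$ matrix is invertible, so $(\Gamma_1,\Gamma_2)\colon D(\calt_4^{*})\cap\Hcaln\to\Cbb^{2}$ is onto and the two functionals are linearly independent — and this surjectivity is what later permits the classification of self-adjoint extensions. The main obstacle is the symmetry step: one must track the complex conjugations carefully through the Plücker identity and justify interchanging the $m\to\infty$ limit with the bilinear expression, and it is exactly here that the limit-circle membership $u,v\in\ell^{2}(\mathbb{N}_0)$ and the convergence established in the first step are indispensable.
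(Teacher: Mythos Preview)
The paper does not supply its own proof of this theorem; it is quoted directly from Allahverdiev's work and used as a black box. Your argument is the standard one and is correct: the limit-circle alternative puts $u,v\in\ell^{2}$, the Pl\"ucker-type identity
\[
[f,g]_m\,[u,v]_m=[f,u]_m\,\overline{[g,v]_m}-[f,v]_m\,\overline{[g,u]_m}
\]
(with $[u,v]_m\equiv 1$ by the Wronskian computation) yields the abstract Green formula, and evaluating on $u,v$ gives surjectivity. This is exactly the identity the paper itself invokes later, in the proof of Theorem~6, written there as $[g,f]_M^{n}=[g,u^{n}]_M^{n}[\overline{f},v^{n}]_M^{n}-[g,v^{n}]_M^{n}[\overline{f},u^{n}]_M^{n}$, so your route matches the machinery the paper relies on. One small point of care: the Wronskian-constancy step uses the homogeneous recursion only for $m\geq 1$ (since $(\lcalt_4 v)_0=1$), and the degenerate cases $n=0,1$ where some $\beta_m^{n}$ vanishes require the same ``trivial modifications'' the paper flags; otherwise the argument goes through as written.
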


\vspace{12pt}

And now that we have appropriate boundary values, we have the following well-known description of all self-adjoint extensions:

\vspace{12pt}

\begin{theorem}\textbf{\emph{(Allahverdiev \cite{R:Alla}.)}}
    When restricted to $\Hcaln$, every self-adjoint extension $\calt_{4}^{h}$ of $\calt_{4}$ is determined by the equality
    \begin{equation*}
        \calt_{4}^{h}f=(\lcalt_{4}f)
    \end{equation*}
    on the functions $f\in D(\calt_{4}^{*})\cap\Hcaln$ satisfying the boundary conditions
    \begin{equation}\label{E:boundcond}
        [f,v]_{\infty}-h[f,u]_{\infty}=0
    \end{equation}
    for $h\in\Rbb\cup\{\infty\}$. For $h=\infty$ the condition (\ref{E:boundcond}) should be replaced by $[f,u]_{\infty}=0$. Conversely, for an arbitrary $h\in\Rbb\cup\{\infty\}$, the boundary condition (\ref{E:boundcond}) determines a self-adjoint extension on $\Hcaln$.
\end{theorem}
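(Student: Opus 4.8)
The plan is to recognize the data already assembled --- Green's formula, the boundary functionals $\Gamma_{1}f=[f,v]_{\infty}$ and $\Gamma_{2}f=[f,u]_{\infty}$, and the identification of $D(\overline{\calt_{4}})\cap\Hcaln$ as the common kernel $\{[f,u]_{\infty}=[f,v]_{\infty}=0\}$ from Theorem~\ref{T:AllaClos} --- as exactly constituting an (ordinary) boundary triplet $(\Cbb,\Gamma_{1},\Gamma_{2})$ for $\calt_{4}^{*}$ restricted to $\Hcaln$, and then to invoke the standard correspondence between self-adjoint extensions and self-adjoint linear relations in the (one-dimensional) boundary space $\Cbb$. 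Because the deficiency indices are $(1,1)$, the relations are parameterized by a single real number together with a point at infinity, which will produce precisely the family (\ref{E:boundcond}).

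The key preliminary step is the Lagrange/Pl\"{u}cker decomposition of the boundary form,
\begin{equation*}
    [f,g]_{\infty}=[f,u]_{\infty}\,\overline{[g,v]_{\infty}}-[f,v]_{\infty}\,\overline{[g,u]_{\infty}},
\end{equation*}
for all $f,g\in D(\calt_{4}^{*})\cap\Hcaln$. Both sides are sesquilinear and, by Theorem~\ref{T:AllaClos} together with Green's formula, vanish whenever $f$ or $g$ lies in $D(\overline{\calt_{4}})\cap\Hcaln$; hence both descend to the quotient $(D(\calt_{4}^{*})\cap\Hcaln)/(D(\overline{\calt_{4}})\cap\Hcaln)$, which has dimension $n_{+}+n_{-}=2$ and is spanned by the classes of $u$ and $v$. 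Green's formula applied to the pair $u,v$ gives the constant Wronskian $[u,v]_{\infty}=1$ (and $[u,u]_{\infty}=[v,v]_{\infty}=0$, since $u,v$ are real), so the identity need only be verified on the four pairings of $u$ and $v$, which is immediate. Combined with $\langle\lcalt_{4}f|g\rangle-\langle f|\lcalt_{4}g\rangle=-[f,g]_{\infty}$, this yields the boundary-triplet Green identity
\begin{equation*}
    \langle\lcalt_{4}f|g\rangle-\langle f|\lcalt_{4}g\rangle=\Gamma_{1}f\,\overline{\Gamma_{2}g}-\Gamma_{2}f\,\overline{\Gamma_{1}g}.
\end{equation*}
Surjectivity of $(\Gamma_{1},\Gamma_{2})\colon D(\calt_{4}^{*})\cap\Hcaln\to\Cbb^{2}$ is then clear, since $u\mapsto(1,0)$ and $v\mapsto(0,-1)$.

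With the boundary triplet in hand, the abstract theory gives a bijection between self-adjoint extensions $\calt_{4}^{h}$ (each of which necessarily satisfies $\overline{\calt_{4}}\subset\calt_{4}^{h}\subset\calt_{4}^{*}$) and the maximal neutral (Lagrangian) subspaces $\Theta\subset\Cbb^{2}$ for the Hermitian symplectic form $B((a_{1},b_{1}),(a_{2},b_{2}))=a_{1}\overline{b_{2}}-b_{1}\overline{a_{2}}$, via $D(\calt_{4}^{h})=\{f:(\Gamma_{1}f,\Gamma_{2}f)\in\Theta\}$. I would then classify these subspaces by hand: a line $\Theta=\mathrm{span}(a,b)$ is neutral iff $a\overline{b}-b\overline{a}=2i\,\imag(a\overline{b})=0$, i.e. $a\overline{b}\in\Rbb$, so either $b\neq 0$ and $a=hb$ with $h\in\Rbb$, giving $\Gamma_{1}f=h\Gamma_{2}f$, that is $[f,v]_{\infty}-h[f,u]_{\infty}=0$; or $b=0$, giving $\Gamma_{2}f=[f,u]_{\infty}=0$, the case $h=\infty$. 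The converse is the same correspondence read backwards: each $h\in\Rbb\cup\{\infty\}$ names a Lagrangian $\Theta$, hence a self-adjoint extension acting by $(\lcalt_{4}f)$ on its domain.

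The main obstacle is not any single computation but making the abstract machinery legitimately apply: one must check that neutrality of $\Theta$ corresponds to symmetry of the restriction while maximality corresponds to self-adjointness (so the boundary condition produces $\calt_{4}^{h}=(\calt_{4}^{h})^{*}$ rather than a merely symmetric operator), and that every self-adjoint extension arises this way, i.e. is a restriction of $\calt_{4}^{*}$ cut out by a single linear boundary condition. In the present $(1,1)$ setting this is clean, since any nonzero neutral line in $\Cbb^{2}$ is automatically Lagrangian; but this is exactly the step where the deficiency-index count and von Neumann's formula $D(\calt_{4}^{*})=D(\overline{\calt_{4}})\oplus N_{i}\oplus N_{-i}$ must be used with care, both to guarantee the two-dimensionality exploited in the Pl\"{u}cker step and to pin down the extension's domain versus its adjoint's domain.
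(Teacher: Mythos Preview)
The paper does not actually prove this theorem: it is stated as a cited result of Allahverdiev, with no proof given in the text. So there is no ``paper's own proof'' to compare against directly.

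That said, your argument is correct and is the standard boundary-triplet proof of this kind of statement. The computations check out: from $u_{0}=1$, $v_{1}=1/\beta_{0}$ and Green's formula one gets $[u,v]_{M}=1$ for all $M$, hence $[u,v]_{\infty}=1$ and $[v,u]_{\infty}=-1$; combined with $[u,u]_{\infty}=[v,v]_{\infty}=0$ this verifies your Pl\"{u}cker identity on the basis $\{u,v\}$ and gives the surjectivity $(\Gamma_{1},\Gamma_{2})(u)=(1,0)$, $(\Gamma_{1},\Gamma_{2})(v)=(0,-1)$. Your sign conventions agree with the paper's Green formula, and the resulting abstract Green identity is exactly the one needed for $(\Cbb,\Gamma_{1},\Gamma_{2})$ to be a boundary triplet. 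The classification of Lagrangian lines in $(\Cbb^{2},B)$ by $h\in\Rbb\cup\{\infty\}$ is correct as you wrote it.

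It is worth noting that the Pl\"{u}cker identity you isolate, in the form
\[
    [g,f]_{\infty}=(\Gamma_{2}g)\,\overline{(\Gamma_{1}f)}-(\Gamma_{1}g)\,\overline{(\Gamma_{2}f)},
\]
is precisely the identity the paper invokes (without proof) in its demonstration of Theorem~5 on the full space $L^{2}(\Rbb^{2})$, where it is written as $[g,f]_{M}^{n}=[g,u^{n}]_{M}^{n}[\overline{f},v^{n}]_{M}^{n}-[g,v^{n}]_{M}^{n}[\overline{f},u^{n}]_{M}^{n}$. So your argument both supplies what the paper omits here and supports a step the paper relies on later. The only place to be a bit more explicit, as you yourself flag, is the passage from ``neutral line'' to ``self-adjoint extension'' (and its converse): in the $(1,1)$ case this really is immediate from von Neumann's formula, since any symmetric extension with deficiency indices $(0,0)$ is self-adjoint and any one-dimensional neutral subspace of a two-dimensional Hermitian symplectic space is automatically maximal.
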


\vspace{12pt}

 \begin{remark}
 Allahverdiev \cite{R:Alla} additionally considers maximal dissipative and accretive extensions of infinite Jacobi matrices. These correspond to $h\in\Cbb$ such that $\imag h\geq 0$ and $\imag h\leq 0$, respectively. He also gives applications to scattering theory.
\end{remark}

\vspace{12pt}

It remains to consider $\calt_{4}$ as acting on the entire space
\begin{equation*}
    L^{2}(\Rbb^{2})=\bigoplus_{n=0}^{\infty}\Hcaln.
\end{equation*}
Let $\calt_{4}^{n}$ denote the restriction of $\calt_{4}$ to $\Hcaln$, that is, the operator in $\Hcaln$ with $D(\calt_{4}^{n})=D(\calt_{4})\cap\Hcaln$ such that $\calt_{4}^{n}f=\calt_{4}f\in\Hcaln$. As we have just shown, the closure of $\calt_{4}^{n}$ is obtained by extending to the space of $f\in D(\calt_{4}^{n*})$ satisfying $\Gamma_{1}^{n}f=\Gamma_{2}^{n}f=0$. We now prove the analogous result for the larger space.

\vspace{12pt}

\begin{theorem}\label{T:t4closure}
    The closure of $\calt_{4}$ is obtained by extending to the space
    \begin{equation*}
        \cald_{cl}=\{f\in D(\calt_{4}^{*});\,\Gamma_{1}^{n}(P_{n}f)=\Gamma_{2}^{n}(P_{n}f)=0\,\,\forall n\}.
    \end{equation*}
\end{theorem}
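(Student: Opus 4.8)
The plan is to reduce the global statement to the already-established fiberwise result (Theorem~\ref{T:AllaClos}) by exploiting the orthogonal decomposition $L^{2}(\Rbb^{2})=\bigoplus_{n}\Hcaln$ and the fact that $\calt_{4}$ respects this decomposition. Since $\calt_{4}$ commutes with each projection $P_{n}$ in the sense that $P_{n}D(\calt_{4})\subset D(\calt_{4})$ and $\calt_{4}\Hcaln\subset\Hcaln$, the same reducibility passes to the adjoint and the closure. Concretely, I would first observe that $\calt_{4}^{*}$ is also decomposed by the $P_{n}$, so that $P_{n}f\in D(\calt_{4}^{n*})$ whenever $f\in D(\calt_{4}^{*})$, and that $\calt_{4}^{*}f=\sum_{n}\calt_{4}^{n*}(P_{n}f)$. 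This makes the boundary-value conditions $\Gamma_{1}^{n}(P_{n}f)=\Gamma_{2}^{n}(P_{n}f)=0$ meaningful, as each $P_{n}f$ lives in the relevant fiber $D(\calt_{4}^{n*})\cap\Hcaln$.

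Next I would show the inclusion $\overline{\calt_{4}}\subset \calt_{4}|_{\cald_{cl}}$. Take $f\in D(\overline{\calt_{4}})$, so there is a sequence $f^{(k)}\in D(\calt_{4})$ with $f^{(k)}\to f$ and $\calt_{4}f^{(k)}\to g$ in $L^{2}$. Applying the bounded operator $P_{n}$, I get $P_{n}f^{(k)}\to P_{n}f$ and $\calt_{4}^{n}(P_{n}f^{(k)})=P_{n}(\calt_{4}f^{(k)})\to P_{n}g$; since $P_{n}f^{(k)}\in D(\calt_{4}^{n})$, this shows $P_{n}f\in D(\overline{\calt_{4}^{n}})$, whence Theorem~\ref{T:AllaClos} gives $\Gamma_{1}^{n}(P_{n}f)=\Gamma_{2}^{n}(P_{n}f)=0$ for every $n$. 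Thus $f\in\cald_{cl}$, and the action agrees because $\calt_{4}^{*}$ computed fiberwise equals $(\lcalt_{4}f)$ on $\cald_{cl}$.

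For the reverse inclusion $\calt_{4}|_{\cald_{cl}}\subset\overline{\calt_{4}}$, I would take $f\in\cald_{cl}$ and approximate. The natural idea is to set $f^{(N)}=\sum_{n=0}^{N}P_{n}f$, which is a finite sum of fiber components; within each fiber Theorem~\ref{T:AllaClos} guarantees $P_{n}f\in D(\overline{\calt_{4}^{n}})$, and the closed operators on finitely many fibers assemble into $\overline{\calt_{4}}$ on the finite orthogonal sum. One then needs $f^{(N)}\to f$ and $\calt_{4}^{*}f^{(N)}=\sum_{n\leq N}\calt_{4}^{n*}(P_{n}f)\to\calt_{4}^{*}f$ in $L^{2}$; the former is the convergence of the orthogonal expansion, and the latter follows since $\calt_{4}^{*}f\in L^{2}$ means its fiber components are square-summable across $n$. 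A short argument combining these tails gives an approximating sequence drawn from $\bigcup_{N}D(\overline{\calt_{4}}|_{\bigoplus_{n\leq N}\Hcaln})$, and a diagonal selection replaces each $P_{n}f$ by genuine $D(\calt_{4}^{n})$ approximants to land in $D(\calt_{4})$ itself.

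The main obstacle I expect is the reverse inclusion, specifically controlling the convergence of $\calt_{4}^{*}f^{(N)}$ to $\calt_{4}^{*}f$ simultaneously with the fiberwise approximation. The subtlety is that membership in $\cald_{cl}$ only imposes the boundary conditions fiber by fiber, so one must verify that no ``energy'' escapes to infinity in the $n$-direction; this requires that the norms $\|\calt_{4}^{n*}(P_{n}f)\|$ are square-summable, which is exactly the condition $f\in D(\calt_{4}^{*})$ encoded in (\ref{E:domadj}). Once that square-summability is in hand, the tail estimates are routine, but setting up the diagonal argument so that the approximants lie in the \emph{finite} linear span $D(\calt_{4})$ — rather than merely in the closure on finitely many fibers — is the step that needs care. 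This is where I would invoke the standard fact (as in Kato \cite{R:Kato}) that the closure of a direct sum of densely defined closable operators, over countably many reducing subspaces, is the direct sum of the closures, which packages precisely the interchange of closure and orthogonal decomposition that the theorem asserts.
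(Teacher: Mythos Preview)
Your argument is correct but follows a genuinely different route from the paper's. The paper characterizes the closure as the double adjoint $\calt_{4}^{**}$ and computes $D(\calt_{4}^{**})$ directly: for $F\in D(\calt_{4}^{*})$ one tests the identity $\langle\calt_{4}^{*}g\,|\,F\rangle=\langle g\,|\,\calt_{4}^{*}F\rangle$ against arbitrary $g\in D(\calt_{4}^{*})$, which by Green's formula and dominated convergence reduces to $\sum_{n}[g,F]_{\infty}^{n}=0$; choosing $g$ supported in a single fiber $\Hcaln$ isolates each bracket, and Allahverdiev's result then translates $[g,P_{n}F]_{\infty}^{n}=0$ for all such $g$ into the boundary conditions $\Gamma_{1}^{n}(P_{n}F)=\Gamma_{2}^{n}(P_{n}F)=0$. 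Your approach instead works in the graph-norm topology on both sides: the inclusion $D(\overline{\calt_{4}})\subset\cald_{cl}$ by projecting an approximating sequence into each fiber, and the reverse by truncating the fiber expansion $f^{(N)}=\sum_{n\leq N}P_{n}f$ and invoking closedness. Your method is structurally cleaner---it is essentially the abstract fact that closure commutes with a countable reducing orthogonal decomposition, which you correctly flag as the crux---while the paper's method is more hands-on with the boundary form and sidesteps any approximation argument. One small simplification: the diagonal selection you outline (replacing each $P_{n}f$ by a genuine $D(\calt_{4}^{n})$ approximant to land in $D(\calt_{4})$) is not needed, since each $P_{n}f$ already lies in $D(\overline{\calt_{4}^{n}})\subset D(\overline{\calt_{4}})$, so $f^{(N)}\in D(\overline{\calt_{4}})$, and then $f^{(N)}\to f$, $\overline{\calt_{4}}\,f^{(N)}\to\calt_{4}^{*}f$ together with closedness of $\overline{\calt_{4}}$ finish the job.
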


\begin{proof}
    The closure of $\calt_{4}$ is of course the adjoint of $\calt_{4}^{*}$, so we are to show that $\cald_{cl}$ is equal to
    \begin{equation*}
        D(\calt_{4}^{**})=\{F\in L^{2}(\Rbb^{2});\, \exists H\in L^{2}(\Rbb^{2})\,\,\text{such that}\,\, \langle\calt_{4}^{*}g|F\rangle=\langle g|H\rangle\,\forall g\in D(\calt_{4}^{*})\}.
    \end{equation*}
    Moreover, we note that $D(\calt_{4}^{**})\subset D(\calt_{4}^{*})$ since $\calt_{4}^{**}$ and $\calt_{4}$ have the same adjoint and since $\calt_{4}^{**}$ is automatically symmetric.

    Now let $F\in\calt_{4}^{*}$ and write
    \begin{equation*}
        F=\sum F_{m}^{n}\psi_{m}^{n},\quad g=\sum g_{m}^{n}\psi_{m}^{n},\quad \text{ and } H=\sum H_{m}^{n}\psi_{m}^{n}.
    \end{equation*}
    Then we wish to find those precise conditions on $F$ such that there is some $H\in L^{2}(\Rbb^{2})$ with the property that
    \begin{equation*}
        \langle \calt_{4}^{*}g| F\rangle = \langle g|H\rangle \qquad \forall g\in D(\calt_{4}^{*}).
    \end{equation*}
    Simply by restricting to $g\in D(\calt_{4})$, we see that it is necessary to have
    \begin{equation*}
        H_{m}^{n}=\beta_{m-1}^{n}F_{m-1}^{n}+\beta_{m}^{n}F_{m+1}^{n}.
    \end{equation*}
    So we must find the conditions on $F$ such that
    \begin{equation*}
        \langle \calt_{4}^{*}g| F\rangle=\langle g|\calt_{4}^{*}F\rangle \qquad \forall g\in D(\calt_{4}^{*}).
    \end{equation*}

    We recall that
    \begin{equation*}
        [g,F]_{M}^{n}=\sum_{m=0}^{M}\left(g_{m}^{n}\overline{(\lcalt_{4}F)_{m}^{n}}-(\lcalt_{4}g)_{m}^{n}\overline{F_{m}^{n}}\right),
    \end{equation*}
    so we see that
    \begin{equation*}
        \sum_{n=0}^{\infty}[g,F]_{M}^{n}
    \end{equation*}
    converges absolutely. Moreover, by the dominated convergence theorem,
    \begin{equation*}
        \lim_{M\rightarrow\infty}\sum_{n=0}^{\infty}[g,F]_{M}^{n}=\sum_{n=0}^{\infty}\lim_{M\rightarrow\infty}[g,F]_{M}^{n}\equiv \sum_{n=0}^{\infty}[g,F]_{\infty}^{n}.
    \end{equation*}
    Hence
    \begin{equation*}
        \sum_{n=0}^{\infty}[g,F]_{\infty}^{n}=\langle g|\calt_{4}^{*}F\rangle-\langle\calt_{4}^{*}g|F\rangle\qquad\forall g\in D(\calt_{4}^{*}).
    \end{equation*}
    So $D(\calt_{4}^{**})$ is precisely the set of $F\in D(\calt_{4}^{*})$ such that
    \begin{equation*}
        \sum_{n=0}^{\infty}[g,F]_{\infty}^{n}=0\qquad\forall g\in D(\calt_{4}^{*}).
    \end{equation*}
    But then this is equivalent to
    \begin{equation*}
        [g,F]_{\infty}^{n}\equiv [g,P_{n}(F)]_{\infty}^{n}=0 \qquad\forall n,\,\,\forall g\in D(\calt_{4}^{*})\cap \Hcaln,
    \end{equation*}
    which in turn, as shown by Allahverdiev \cite{R:Alla}, is equivalent to
    \begin{equation*}
        \Gamma_{1}^{n}(P_{n}F)=\Gamma_{2}^{n}(P_{n}F)=0\qquad\forall n.
    \end{equation*}
    So indeed we have $\cald_{cl}=D(\calt_{4}^{**})$.
\end{proof}

\vspace{12pt}

We have also seen that self-adjoint extensions of $\calt_{4}^{n}$ are in one-to-one correspondence with boundary conditions of the form
\begin{equation*}
    \Gamma_{1}^{n}(f)-h_{n}\Gamma_{2}^{n}(f)=0
\end{equation*}
for $h_{n}\in\Rbb\cup\{\infty\}$. Now we define
\begin{equation*}
    \cald_{h}=\{f\in D(\calt_{4}^{*});\,\Gamma_{1}^{n}(P_{n}f)-h_{n}\Gamma_{2}^{n}(P_{n}f)=0 \,\,\forall n\},
\end{equation*}
and we denote the extension of $\calt_{4}$ to this domain as $\calt_{4}^{h}$. We next show that all self-adjoint extensions of $\calt_{4}$ are of this form.

\vspace{12pt}

\begin{theorem}
    Every self-adjoint extension $\calt_{4}^{h}$ of $\calt_{4}$ is determined by extending the domain to a set of the form
    \begin{equation*}
        \cald_{h}=\{f\in D(\calt_{4}^{*});\,\Gamma_{1}^{n}(P_{n}f)-h_{n}\Gamma_{2}^{n}(P_{n}f)=0 \,\,\forall n\},
    \end{equation*}
    where $h=(h_{n})_{n=0}^{\infty}$ is an arbitrary sequence with $h_{n}\in\Rbb\cup\{\infty\}$, and by the rule
    \begin{equation*}
        \calt_{4}^{h}f=(\lcalt_{4}f)\equiv\calt_{4}^{*}f.
    \end{equation*}
    For $h_{n}=\infty$ the condition should be replaced by $\Gamma_{2}^{n}(P_{n}f)=0$. Conversely, for an arbitrary sequence $h=(h_{n})_{n=0}^{\infty}$ with $h_{n}\in\Rbb\cup\{\infty\}$, the set $\cald_{h}$ determines a self-adjoint extension of $\calt_{4}$.
\end{theorem}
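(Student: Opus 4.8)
The plan is to transfer the entire classification onto the boundary (symplectic) form that was already extracted in the proof of Theorem~\ref{T:t4closure}, and then to argue two things separately: a \emph{construction}, showing each $\cald_{h}$ does carry a self-adjoint operator, and a \emph{rigidity} statement, showing no self-adjoint extension escapes this family. One reduction is free throughout: every self-adjoint extension of a symmetric operator is a restriction of its adjoint, so on any admissible domain the operator must act by $\calt_{4}^{*}$, i.e.\ by $\lcalt_{4}$. Thus the rule $\calt_{4}^{h}f=(\lcalt_{4}f)$ is automatic, and the whole content of the theorem is the identification of the \emph{domains}.

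For the converse direction I would fix a sequence $h=(h_{n})$ with $h_{n}\in\Rbb\cup\{\infty\}$. The per-block theorems of Allahverdiev quoted above (\cite{R:Alla}) say that the single condition $\Gamma_{1}^{n}(P_{n}f)-h_{n}\Gamma_{2}^{n}(P_{n}f)=0$ (replaced by $\Gamma_{2}^{n}(P_{n}f)=0$ when $h_{n}=\infty$) cuts out a self-adjoint extension $\calt_{4}^{n,h_{n}}$ of $\calt_{4}^{n}$ on $\Hcaln$. I would then assemble the Hilbert-space direct sum $\bigoplus_{n}\calt_{4}^{n,h_{n}}$ over $L^{2}(\Rbb^{2})=\bigoplus_{n}\Hcaln$ and check that its natural domain is exactly $\cald_{h}$; the summability $\sum_{n}\|\calt_{4}^{*}P_{n}f\|^{2}<\infty$ needed there is automatic from $f\in D(\calt_{4}^{*})$. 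Since an $\ell^{2}$-direct sum of self-adjoint operators is self-adjoint and each summand extends $\calt_{4}^{n}$, the operator $\calt_{4}^{h}$ is a self-adjoint extension of $\calt_{4}$ acting by $\calt_{4}^{*}$. This direction is routine once the per-block result and the direct-sum lemma are in place.

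For the forward direction I would work with the boundary form $B(g,F)=\langle g|\calt_{4}^{*}F\rangle-\langle\calt_{4}^{*}g|F\rangle=\sum_{n}[g,F]_{\infty}^{n}$ established in the proof of Theorem~\ref{T:t4closure}, which exhibits $B$ as the symplectic-orthogonal direct sum of the per-block forms $B_{n}(g,F)=[P_{n}g,P_{n}F]_{\infty}^{n}$. A domain $\cald$ with $D(\overline{\calt_{4}})\subset\cald\subset D(\calt_{4}^{*})$ gives a self-adjoint extension precisely when it is maximal $B$-isotropic. Passing to the boundary data $b(f)=(\Gamma_{1}^{n}(P_{n}f),\Gamma_{2}^{n}(P_{n}f))_{n}$ and using $D(\overline{\calt_{4}})=\ker b$ (Theorem~\ref{T:t4closure}), self-adjoint extensions correspond bijectively to maximal isotropic subspaces of the total boundary space relative to $B=\bigoplus_{n}B_{n}$, where each $B_{n}$ is the standard nondegenerate form on the two-dimensional fibre $\Cbb^{2}_{n}$. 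In this dictionary the family $\cald_{h}$ is exactly the family of \emph{product} isotropic subspaces $\bigoplus_{n}\ell_{h_{n}}$, where $\ell_{h_{n}}\subset\Cbb^{2}_{n}$ is the Lagrangian line $\{\Gamma_{1}^{n}=h_{n}\Gamma_{2}^{n}\}$.

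The crux, and the step I expect to be the main obstacle, is therefore to prove that \emph{every} maximal $B$-isotropic subspace $\Lambda$ is of this product type; once it is, maximality forces $\Lambda\cap\Cbb^{2}_{n}$ to be a $B_{n}$-Lagrangian line for each $n$, which is a single $h_{n}\in\Rbb\cup\{\infty\}$, and the two directions close the bijection. I would attack gradedness by exploiting that $\calt_{4}$ commutes with the $y$-number operator $N_{y}=\cry\any$, whose spectral projections are the $P_{n}$, so that $B$ is $N_{y}$-graded; the aim is to show $\Lambda=\bigoplus_{n}(\Lambda\cap\Cbb^{2}_{n})$ by probing a prospective element of $\Lambda$ against single-fibre boundary data, reading off one isotropy relation per fibre from the orthogonality of the $B_{n}$. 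The genuine difficulty is that the total boundary space is infinite-dimensional, carrying one fibre for each $n$, so a priori one must exclude ``mixed'' isotropic subspaces that couple distinct fibres and contain no single-fibre vectors to probe against; confirming that the $\ell^{2}$-summability built into $D(\calt_{4}^{*})$ together with the $N_{y}$-grading is compatible only with the fibrewise conditions is where essentially all of the work concentrates, and is the decisive point on which the universal assertion of the theorem rests.
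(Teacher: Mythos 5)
Your converse direction is correct and is essentially the paper's own: the paper verifies symmetry, closedness, and vanishing deficiency indices of $\calt_{4}^{h}$ by hand, which is exactly your direct-sum lemma unpacked, so nothing is lost there. The problem is the forward direction. Your reduction to maximal isotropic subspaces of the boundary form $B=\bigoplus_{n}B_{n}$ is a reasonable frame, and you correctly isolate the crux --- but the claim you leave open (that every maximal $B$-isotropic subspace splits as $\bigoplus_{n}\ell_{h_{n}}$) is not merely difficult, it is false, so no amount of work along the proposed line can close it. Concretely, let $A=\overline{\calt_{4}}$ with deficiency spaces $N_{\pm}=\ker(\calt_{4}^{*}\mp i)$; since $\calt_{4}^{*}$ and its domain act blockwise, $N_{\pm}$ is the $\ell^{2}$-sum of the one-dimensional per-block deficiency spaces $\Cbb e_{\pm}^{n}$ with $e_{\pm}^{n}\in\Hcaln$. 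By von Neumann's theory, every unitary $U:N_{+}\rightarrow N_{-}$ yields a self-adjoint extension $A_{U}$ with domain $D(A)\dotplus(1+U)N_{+}$. Choose $U$ to swap two fibres, say $Ue_{+}^{0}=e_{-}^{1}$, $Ue_{+}^{1}=e_{-}^{0}$, and $Ue_{+}^{n}=e_{-}^{n}$ for $n\geq 2$. Then $f=e_{+}^{0}+e_{-}^{1}\in D(A_{U})$, but $P_{0}f=e_{+}^{0}\notin D(A_{U})$: writing $e_{+}^{0}=g+\phi+U\phi$ and using the uniqueness of the von Neumann decomposition $D(\calt_{4}^{*})=D(A)\dotplus N_{+}\dotplus N_{-}$ would force $\phi=e_{+}^{0}$ and $U\phi=0$, contradicting unitarity. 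So $A_{U}$ is a self-adjoint extension of $\calt_{4}$ whose domain is not $P_{n}$-invariant and hence not of the form $\cald_{h}$: the ``mixed'' maximal isotropic subspaces you worried about genuinely exist. Note also that your proposed remedy via the $N_{y}$-grading cannot work, because a self-adjoint extension of an operator commuting with a projection need not itself commute with that projection.

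For comparison, the paper's proof of the forward direction restricts a given self-adjoint extension $\calt_{4}^{e}$ to $\Hcaln$ and asserts that since the deficiency indices of $\calt_{4}^{e}$ are both zero, those of the restriction $\calt_{4}^{e,n}$ are both zero; that inference is exactly where reducibility is needed, and the example above defeats it (for $A_{U}$ the restriction to $\mathcal{H}_{0}$ is closed and symmetric but not self-adjoint in $\mathcal{H}_{0}$). So the paper's argument passes silently over the very gap you flagged, and the theorem is correct only under the additional hypothesis that the extension is reduced by the decomposition $L^{2}(\Rbb^{2})=\bigoplus_{n}\Hcaln$, i.e.\ commutes with every $P_{n}$. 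With that hypothesis both routes close immediately: $\cald_{e}\cap\Hcaln$ gives a self-adjoint extension of $\calt_{4}^{n}$, Allahverdiev's per-block classification supplies an $h_{n}$, and maximality of self-adjoint operators ($\calt_{4}^{e}\subset\calt_{4}^{h}$ with both self-adjoint) yields equality --- which is precisely the paper's closing step, and is also how your fibrewise bijection would terminate. Your diagnosis of where all the weight falls was exactly right; the resolution is that the universal statement must be weakened by the reducibility hypothesis, not that a cleverer grading argument is missing.
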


\begin{proof}
    We begin by proving the converse: we take an arbitrary sequence $h$ with $h_{n}\in\Rbb\cup\{\infty\}$ and prove that the extension $\calt_{4}^{h}$ to the domain $\cald_{h}$ is self-adjoint.

    We first show that $\cald_{h}$ is symmetric, that is, we show that
    \begin{equation*}
        \langle\calt_{4}^{*}g|f\rangle=\langle g|\calt_{4}^{*}f\rangle\qquad\forall f,g\in\cald_{h}.
    \end{equation*}
    As in the proof of Theorem \ref{T:t4closure}, we use the identity
    \begin{equation*}
        \sum_{n=0}^{\infty}[g,f]_{\infty}^{n}=\langle g|\calt_{4}^{*}f\rangle-\langle\calt_{4}^{*}g|f\rangle.
    \end{equation*}
    But now we may use the identity
    \begin{equation*}
        [g,f]_{M}^{n}=[g,u^{n}]_{M}^{n}[\overline{f},v^{n}]_{M}^{n}-[g,v^{n}]_{M}^{n}[\overline{f},u^{n}]_{M}^{n},
    \end{equation*}
    where $u^{n}$ and $v^{n}$ are the functions occurring in the definitions of $\Gamma_{1}^{n}$ and $\Gamma_{2}^{n}$. Hence, in the limit,
    \begin{equation*}
        [g,f]_{\infty}^{n}=(\Gamma_{2}^{n}P_{n}g)(\Gamma_{1}^{n}P_{n}\overline{f})-(\Gamma_{1}^{n}P_{n}g)(\Gamma_{2}^{n}P_{n}\overline{f}).
    \end{equation*}
    Since $g,f$ are in $\cald_{h}$, we can use the identities
    \begin{equation*}
        \Gamma_{1}^{n}(P_{n}f)-h_{n}\Gamma_{2}^{n}(P_{n}f)=0
    \end{equation*}
    and
    \begin{equation*}
        \Gamma_{1}^{n}(P_{n}g)-h_{n}\Gamma_{2}^{n}(P_{n}g)=0
    \end{equation*}
    to see that $[g,f]_{\infty}^{n}=0$ for all $n$. Hence $\calt_{4}^{h}$ is a symmetric operator.

    To show that $\calt_{4}^{h}$ is a closed operator, we suppose that
    \begin{equation*}
        \cald_{h}\ni f_{k}\rightarrow f\qquad\text{in }L^{2}(\Rbb^{2})
    \end{equation*}
    and
    \begin{equation*}
        \calt_{4}^{h}f_{k}\rightarrow F\qquad\text{in }L^{2}(\Rbb^{2}).
    \end{equation*}
    Then clearly, for all $n$,
    \begin{equation*}
        P_{n}f_{k}\rightarrow P_{n}f\qquad\text{in }\Hcaln
    \end{equation*}
    and
    \begin{equation*}
        P_{n}\calt_{4}^{h}f_{k}=\calt_{4}^{h}P_{n}f_{k}\rightarrow P_{n}F\qquad\text{in }\Hcaln.
    \end{equation*}
    But since $\calt_{4}^{h,n}$, the restriction of $\calt_{4}^{h}$ to $\Hcaln$, is closed (as is $\calt_{4}^{*}$), we then have that $P_{n}f\in\cald_{h}\cap\Hcaln$ (hence $f\in\cald_{h}$) and that $\calt_{4}^{h}P_{n}f=P_{n}\calt_{4}^{h}f=P_{n}F$ (hence $\calt_{4}^{h}f=F$). So $\calt_{4}^{h}$ is a closed operator.

    Next we show that the deficiency indices of $\calt_{4}^{h}$ are both zero. Suppose $u_{\pm}\in D(\calt_{4}^{h*})$ are such that $u_{\pm}\in\text{Ker}(\calt_{4}^{h*}\mp i)$. Then $P_{n}u_{\pm}\in\text{Ker}(\calt_{4}^{h,n*}\mp i)$, and, since $\calt_{4}^{h,n}$ is self-adjoint, $P_{n}u_{\pm}=0$ for all $n$. Hence $u_{\pm}=0$. So the deficiency indices of $\calt_{4}^{h}$ are both zero, and $\calt_{4}^{h}$ is a closed operator. Hence $\calt_{4}^{h}$ is self-adjoint.

    For the other direction, let $\calt_{4}^{e}$, with domain $\cald_{e}$, be some self-adjoint extension of $\calt_{4}$, and let $\calt_{4}^{e,n}$ be its restriction to $\Hcaln$. To show that $\calt_{4}^{e,n}$ is closed, we suppose that
    \begin{equation*}
        \cald_{e}\cap\Hcaln\ni f_{k}\rightarrow f\qquad\text{in }\Hcaln
    \end{equation*}
    and that
    \begin{equation*}
        \calt_{4}^{e,n}f_{k}\rightarrow F\qquad\text{in }\Hcaln.
    \end{equation*}
    Then, in particular,
    \begin{equation*}
        \cald_{e}\ni f_{k}\rightarrow f\qquad\text{in }L^{2}(\Rbb^{2})
    \end{equation*}
    and
    \begin{equation*}
        \calt_{4}^{e}f_{k}\rightarrow F\qquad\text{in }L^{2}(\Rbb^{2}).
    \end{equation*}
    Since $\calt_{4}^{e}$ is closed, we have $f\in\cald_{e}\cap\Hcaln$ and $\calt_{4}^{e,n}f=F$, which proves that $\calt_{4}^{e,n}$ is a closed operator. And since the deficiency indices of $\calt_{4}^{e}$ are both zero, it is clear that the deficiency indices of $\calt_{4}^{e,n}$ are both zero. Hence $\calt_{4}^{e,n}$ is self-adjoint.

    Now $\calt_{4}^{e,n}$ is a self-adjoint extension of $\calt_{4}^{n}$, so the results of Allahverdiev cited above show that $\calt_{4}^{e,n}$ must be given by a boundary condition of the form
    \begin{equation*}
        \Gamma_{1}^{n}(f)-h_{n}\Gamma_{2}^{n}(f)=0
    \end{equation*}
    for some $h_{n}\in\Rbb\cup\{\infty\}$. Hence we have $\cald_{e}\subset\cald_{h}$, that is,
    \begin{equation*}
        \calt_{4}^{e}\subset\calt_{4}^{h},
    \end{equation*}
    and since both operators are self-adjoint, we then have $\calt_{4}^{e}=\calt_{4}^{h}$. This completes the proof of the theorem.
\end{proof}

\vspace{12pt}

 We have now categorized all self-adjoint extensions of the operator $\calt_{4}$, so we are presented with the basic question: which is the ``right'' extension? We expect the physics of the problem to dictate the appropriate extension, for which we may return to the original work of Calvo and Pic\'{o}n \cite{R:CalvoPiconBagan},\cite{R:CalvoPicon},\cite{R:CalvoPiconZam}. They state that the cubic generators, $\calt_{4}$, $\calt_{5}$, $\calt_{6}$, and $\calt_{7}$, can be implemented with passive optical elements having higher-than-first-order aberrations (nonquadratic refractive surfaces) \cite{R:CalvoPicon}. Perhaps the physics of the apparatus will determine the ``right'' self-adjoint extension.

\vspace{12pt}

\section{The Semiclassical Formalism}\label{S:SCformalism}

In the case of Gaussian transformations, one may use the theory of metaplectic operators (sometimes presented in terms of the Stone-von Neumann theorem) to deduce the underlying canonical transformations. However, non-Gaussian transformations are not metaplectic operators, so the arguments must be modified. Calvo and Pic\'{o}n then ask if the Stone-von Neumann theorem can be extended to the case of the above cubic generators, for then ``this would enable one to find the explicit form of the symplectic transform and thus the construction of the associated optical system'' \cite{R:CalvoPicon}. In the following sections we take a different approach: we approximate the non-Gaussian transformations by semiclassical Fourier integral operators, by starting with the semiclassical Fourier integral operator ansatz and then by solving the resulting eikonal equation and transport equations. This, along with Egorov's theorem (stated in Section \ref{S:Egorov}), justifies the claim that the underlying canonical transformation is the Hamilton flow associated to the evolution equation. As a first step, in this section we put the work of Calvo and Pic\'{o}n in the framework of semiclassical analysis.

We wish to construct approximate solutions to the evolution equations associated to the cubic generators $\calt_{4}$, $\calt_{5}$, $\calt_{6}$, and $\calt_{7}$, approximate in the sense that we will work in the semiclassical regime. For this we start with the two-dimensional semiclassical ground state
\begin{equation*}
    |0,0\rangle=(\pi h)^{-\frac{1}{2}}e^{-\frac{1}{2h}(x^{2}+y^{2})}.
\end{equation*}
To get the other semiclassical Hermite functions, we apply the creation operators
\begin{equation*}
    \crx=(2h)^{-\frac{1}{2}}\left(x-h\frac{\partial}{\partial x}\right)\qquad\text{and}\qquad \cry=(2h)^{-\frac{1}{2}}\left(y-h\frac{\partial}{\partial y}\right).
\end{equation*}
We also have the corresponding annihilation operators
\begin{equation*}
    \anx=(2h)^{-\frac{1}{2}}\left(x+h\frac{\partial}{\partial x}\right)\qquad\text{and}\qquad \any=(2h)^{-\frac{1}{2}}\left(y+h\frac{\partial}{\partial y}\right),
\end{equation*}
so that
\begin{equation*}
    [\anx,\crx]=1\quad\text{and}\quad [\any,\cry]=1.
\end{equation*}
The [normalized] semiclassical Hermite functions are then given by
\begin{equation*}
    |m,n\rangle=(m!n!)^{-\frac{1}{2}}\crx{}^{m}\cry{}^{n}|0,0\rangle.
\end{equation*}

We again let
\begin{equation*}
   \calt_{4}=\frac{1}{2}(\crx+\anx-\crx\crx\anx-\crx\anx\anx-\crx\cry\any-\anx\cry\any)
\end{equation*}
but now where the creation and annihilation operators are semiclassical. In this context it is more convenient to introduce the semiclassical differential operator
\begin{equation*}
    \tilt_{4}:=-2^{\frac{1}{2}}h^{\frac{3}{2}}\calt_{4}.
\end{equation*}
For reference, we note that
\begin{equation}\label{E:tiltexp}
    \tilt_{4}=\frac{1}{2}x\left((hD_{x})^{2}+(hD_{y})^{2}+x^{2}+y^{2}\right)-\frac{5}{2}hx-\frac{1}{2}ih^{2}D_{x},
\end{equation}
and we note that the semiclassical Weyl symbol of $\tilt_{4}$ is
\begin{equation*}
    p_{4}(x,y,\xi,\eta;h)=\frac{1}{2}x(x^{2}+y^{2}+\xi^{2}+\eta^{2})-\frac{5}{2}hx.
\end{equation*}
We then have the permutation properties similar to those in Section \ref{S:CalvoPiconReview}. Explicitly, for $\hat{U}_{\Gamma_{2}}$ we have
\begin{align*}
    e^{it\tilt_{4}/h}|0,0\rangle&=\cos\left(2^{-1/2}h^{1/2}t\right)|0,0\rangle-i\sin\left(2^{-1/2}h^{1/2}t\right)|1,0\rangle\\
    e^{it\tilt_{5}/h}|0,0\rangle&=\cos\left(2^{-1/2}h^{1/2}t\right)|0,0\rangle-\sin\left(2^{-1/2}h^{1/2}t\right)|1,0\rangle\\
    e^{it(\frac{1}{2}(\tilt_{3}+\sqrt{3}\tilt_{8}))/h}|0,0\rangle&=e^{it2^{-1/2}h^{1/2}}|0,0\rangle\\
    e^{it\tilt_{4}/h}|1,0\rangle&=\cos\left(2^{-1/2}h^{1/2}t\right)|1,0\rangle-i\sin\left(2^{-1/2}h^{1/2}t\right)|0,0\rangle\\
    e^{it\tilt_{5}/h}|1,0\rangle&=\cos\left(2^{-1/2}h^{1/2}t\right)|1,0\rangle+\sin\left(2^{-1/2}h^{1/2}t\right)|0,0\rangle\\
    e^{it(\frac{1}{2}(\tilt_{3}+\sqrt{3}\tilt_{8}))/h}|1,0\rangle&=e^{-it2^{-1/2}h^{1/2}}|1,0\rangle.
\end{align*}
The operations for $\hat{U}_{\Gamma_{1}}$ and $\hat{U}_{\Gamma_{3}}$ are similar. Hence, in this choice of scale, we have permutations of the three lowest Hermite-Gaussian modes when $$t:\, 0\mapsto\frac{\pi}{\sqrt{2h}}.$$

\vspace{12pt}

\section{Warm-Up: The FIO Representation of a Gaussian Transformation}\label{S:WarmUp}

For non-Gaussian transformations, the approximate representation by Fourier integral operators will be somewhat complicated, so we begin with a simpler situation: the case of a Gaussian transformation. For the sake of concreteness, we restrict our attention to the semiclassical differential operator
\begin{equation*}
P(h)=-h(\crx\cry+\anx\any)=-h^{2}\frac{\partial^{2}}{\partial x\partial y}-xy,
\end{equation*}
though all ten generators of $\hat{U}(S)$, the metaplectic representation of $Sp(4,\Rbb)$ (see \cite{R:CalvoPicon}), may be treated in the same way.  Our goal is then to solve the evolution equation
\begin{equation}\label{E:evoleqKx}
    \begin{cases}
        (hD_{t}+P)u=0\\
        u|_{t=0}=v.
    \end{cases}
\end{equation}

Following the method outlined in the book of Grigis and Sj\"{o}strand (\cite{R:GrigisSjostrand}, p.129), we try
\begin{equation*}
    u=U_{t}v(x,y)=(2\pi h)^{-2}\iint e^{\frac{i}{h}\varphi(t,x,y,\xi,\eta)}a(t)\hat{v}(\xi,\eta)\, d\xi\,d\eta,
\end{equation*}
where $\varphi$ is a quadratic form in $(x,y,\xi,\eta)$. Here $\hat{v}$ denotes the semiclassical Fourier transform:
\begin{equation*}
    \hat{v}(\xi,\eta)=\int e^{-\frac{i}{h}(x\xi+y\eta)}v(x,y)\,dx\,dy.
\end{equation*}
With this ansatz, we arrive at the expression
\begin{equation*}
    \begin{aligned}
    (hD_{t}+h^{2}D_{x}D_{y}-xy)u
    &=(2\pi h)^{-2}\iint\left[\frac{\partial\varphi}{\partial t}+\frac{\partial\varphi}{\partial x}\frac{\partial\varphi}{\partial y}-xy\right]
        e^{\frac{i\varphi}{h}}a(t)\hat{v}(\xi,\eta)\, d\xi\, d\eta\\
    &\qquad\quad-ih(2\pi h)^{-2}\iint\left[\frac{\partial a}{\partial t}+\frac{\partial^{2}\varphi}{\partial x\partial y}a\right]
        e^{\frac{i\varphi}{h}}\hat{v}(\xi,\eta)\, d\xi\, d\eta.
    \end{aligned}
\end{equation*}
Thus we wish to solve both the eikonal equation,
\begin{equation}\label{E:eikonalKx}
    \frac{\partial\varphi}{\partial t}+\frac{\partial\varphi}{\partial x}\frac{\partial\varphi}{\partial y}-xy=0,
\end{equation}
and also the transport equation,
\begin{equation}\label{E:transportKx}
    \frac{\partial a}{\partial t}+\frac{\partial^{2}\varphi}{\partial x\partial y}a=0.
\end{equation}
It is due to the special form of the operator $P$ that we are able to solve this in such a way that the amplitude $a$ depends only on $t$. Moreover,
we want $\varphi$ to satisfy
\begin{equation*}
    \varphi(0,x,y,\xi,\eta)=x\xi+y\eta,
\end{equation*}
and we want $a$ to solve
\begin{equation*}
    a(0)=1,
\end{equation*}
so that the initial condition is satisfied in the evolution equation (\ref{E:evoleqKx}).

We begin with a solution of the eikonal equation (\ref{E:eikonalKx}). The semiclassical Weyl symbol of $P$ is $$p(x,y,\xi,\eta)=\xi\eta-xy,$$ and we write the semiclassical Weyl symbol of the evolution equation (\ref{E:evoleqKx}) as
$$g(t,x,y,\tau,\xi,\eta)=\tau+p(x,y,\xi,\eta),$$ which is actually independent of $t$. The eikonal equation may then be written in the simple form
\begin{equation}\label{E:gdphi}
g(t,x,y,d_{t}\varphi,d_{x}\varphi,d_{y}\varphi)=0,
\end{equation}
whose solution, which we now sketch, is provided by Hamilton-Jacobi theory.

Hamilton's equations for $g$ are
\begin{equation*}
    \begin{cases}
    \dot{t}=\frac{\partial g}{\partial\tau}=1\\
    \dot{x}=\frac{\partial g}{\partial\xi}=\eta\\
    \dot{y}=\frac{\partial g}{\partial\eta}=\xi\\
    \dot{\tau}=-\frac{\partial g}{\partial t}=0\\
    \dot{\xi}=-\frac{\partial g}{\partial x}=y\\
    \dot{\eta}=-\frac{\partial g}{\partial y}=x.
    \end{cases}
\end{equation*}
Hence it is natural to identify the evolution parameter with time $t$. These equations may be easily solved to give the Hamilton flow of $g$:
\begin{equation*}
    \begin{cases}
    t=t\\
    x(t)=x_{0}\cosh t+\eta_{0}\sinh t\\
    y(t)=y_{0}\cosh t+\xi_{0}\sinh t\\
    \tau(t)=\tau_{0}\\
    \xi(t)=y_{0}\sinh t+\xi_{0}\cosh t\\
    \eta(t)=x_{0}\sinh t+\eta_{0}\cosh t.
    \end{cases}
\end{equation*}
One may think of this as just being the Hamilton flow of $p$, since the flow in the $(t,\tau)$ variables is trivial. In this point of view, we may write the Hamilton flow of $p$ in the matrix formulation:
\begin{equation*}
    \left(
    \begin{matrix}
    x(t)\\
    y(t)\\
    \xi(t)\\
    \eta(t)
    \end{matrix}
    \right)
    =
    \left(
    \begin{matrix}
    \cosh t&0&0&\sinh t\\
    0&\cosh t&\sinh t&0\\
    0&\sinh t&\cosh t&0\\
    \sinh t&0&0&\cosh t
    \end{matrix}
    \right)
    \left(
    \begin{matrix}
    x_{0}\\
    y_{0}\\
    \xi_{0}\\
    \eta_{0}
    \end{matrix}
    \right).
\end{equation*}
However, for the time being we take the point of view of the Hamilton flow of $g$. This Hamiltonian system (in a six-dimensional cotangent space) is completely integrable, since we have the three ($=\frac{6}{2}$) independent conserved quantities
\begin{equation*}
    \begin{cases}
    g(t,x,y,\tau,\xi,\eta)\\
    y^{2}-\xi^{2}\\
    \tau.
    \end{cases}
\end{equation*}
The fact that we have three conserved quantities corresponds to the fact that the flow is constrained to a Lagrangian submanifold; that is, a three-dimensional submanifold $\Lambda$ of the cotangent space such that the restriction of the symplectic form $\sigma$ to $\Lambda$ is zero (i.e., is \emph{isotropic}): $$\sigma|_{\Lambda}=0.$$

To construct a solution of the eikonal equation, the basic idea is to start with a two-dimensional isotropic submanifold $\Lambda^{\prime}$ and then to propagate $\Lambda^{\prime}$ with respect to the Hamilton flow of $g$, filling out the whole Lagrangian submanifold $\Lambda$. To this end, we let
\begin{equation*}
    \begin{aligned}
    \Lambda^{\prime}&=\{(0,x_{0},y_{0},\tau_{0},\xi_{0},\eta_{0});\, g(0,x_{0},y_{0},\tau_{0},\xi_{0},\eta_{0})=\tau_{0}+\xi_{0}\eta_{0}-x_{0}y_{0}=0\}\\
    &=\{(0,x_{0},y_{0},x_{0}y_{0}-\xi_{0}\eta_{0},\xi_{0},\eta_{0})\},
    \end{aligned}
\end{equation*}
which we think of as the ``submanifold of initial conditions'', and where we consider $(\xi_{0},\eta_{0})$ as universally fixed. Propagating along the Hamilton flow of $g$, we thus get the whole manifold $\Lambda$:
\begin{equation*}
    \Lambda=\{(t,x(t),y(t),x(t)y(t)-\xi(t)\eta(t),\xi(t),\eta(t))\}.
\end{equation*}

On the one hand, we may think of a trajectory along the Hamilton flow as being determined by the parameters $(t,x_{0},y_{0})$. On the other hand, we may think of it is as determined by the parameters $(t,x,y)$, since
\begin{equation*}
    \begin{cases}
    x_{0}=x\sech t-\eta_{0}\tanh t\\
    y_{0}=y\sech t-\xi_{0}\tanh t.
    \end{cases}
\end{equation*}
Hence, rewriting the variables $(\xi_{0},\eta_{0})$ as $(\xi,\eta)$, we may rewrite $\Lambda$ as
\begin{equation*}
    \begin{aligned}
    \Lambda&=\{(t,x,y,(xy-\xi\eta)\sech^{2}t-(x\xi+y\eta)\tanh t\sech t,\\
    &\qquad\qquad\qquad y\tanh t+\xi\sech t,\,\,x\tanh t+\eta\sech t)\}.
    \end{aligned}
\end{equation*}

To conclude the solution of the eikonal equation, we seek a function $\varphi$ such that $\Lambda$ is the graph of the gradient of $\varphi$, to agree with (\ref{E:gdphi}), and such that $\varphi(0,x,y,\xi,\eta)=x\xi+y\eta$. This is easily accomplished, and we thus have the phase:
\begin{equation*}
    \varphi(t,x,y,\xi,\eta)=(xy-\xi\eta)\tanh t+(x\xi+y\eta)\sech t.
\end{equation*}
And one may now check directly that this is the solution of the eikonal equation.

It is now easy to solve the transport equation (\ref{E:transportKx}):
\begin{equation*}
    a(t)=\sech t.
\end{equation*}
So we have the following expression for the solution operator $U_{t}$:
\begin{equation*}
    U_{t}v(x,y)=\sech t\iint \exp\left(\frac{i}{h}[(xy-\xi\eta)\tanh t+(x\xi+y\eta)\sech t]\right)\, \hat{v}(\xi,\eta)\, \frac{d\xi\, d\eta}{(2\pi h)^{2}}.
\end{equation*}
We may use the method of stationary phase (which, in this case, is \emph{exact}) to simplify this expression and get an integral in terms of $v$. We withhold the details for now, since this will be accomplished in the next section for a different but similar operator.

Also, it is known from the general theory (see, for example, \cite{R:GrigisSjostrand}) that $\varphi$ is a generating function of the canonical transformation, which in this case is the Hamilton flow of the symbol $p$. That is, the Hamilton flow of $p$ is given by
\begin{equation*}
    \left(\frac{\partial\varphi}{\partial\xi},\frac{\partial\varphi}{\partial\eta},\xi,\eta\right)
    \mapsto\left(x,y,\frac{\partial\varphi}{\partial x},\frac{\partial\varphi}{\partial y}\right).
\end{equation*}
This can also be checked directly, now that we have an explicit expression for the phase $\varphi$.

\vspace{12pt}

\section{The Gyrator Transform}\label{S:Gyrator}

The exact same method may be applied to the semiclassical differential operator
\begin{equation*}
    T_{1}=h(\crx\any+\cry\anx)=xy-h^{2}\frac{\partial^{2}}{\partial x\partial y}.
\end{equation*}
In the previous section, the solution of the evolution equation was exact, so the semiclassical parameter $h$ ultimately played no role. Hence in this section we simply let $h=1$. In this case, the solution to the evolution equation
\begin{equation*}
    \begin{cases}
    (D_{t}+T_{1})u=0\\
    u|_{t=0}=v
    \end{cases}
\end{equation*}
for $t\in (-\pi/2,\pi/2)$ is given by
\begin{equation}\label{E:gyraAA}
    u(t,x,y)=(2\pi)^{-2}|\sec t|\iint \exp\left(i[(x\xi+y\eta)\sec t -(xy+\xi\eta)\tan t]\right)\hat{v}(\xi,\eta)\, d\xi\, d\eta.
\end{equation}
It remains to extend the solution to all $t\in\Rbb$.

We may use the method of stationary phase (which is \emph{exact} in this case) to rewrite this integral as a double integral involving $v$. This simply amounts to a use of the following fact. Let $Q$ be a real, non-degenerate $n\times n$ symmetric matrix. Then the Fourier transform acts as follows (for details, see \cite{R:GrigisSjostrand}, p.21):
\begin{equation*}
    \mathcal{F}: e^{\frac{i}{2}\langle x,Qx\rangle}\mapsto (2\pi)^{\frac{n}{2}}e^{i\frac{\pi}{4}\text{sgn\,Q}}|\det Q|^{-\frac{1}{2}}e^{-\frac{i}{2}\langle \xi, Q^{-1}\xi\rangle}.
\end{equation*}
After some calculation, for $t\in (0,\pi/2)$ we arrive at the integral expression
\begin{equation}\label{E:gyraBB}
    u(t,x,y)=(2\pi)^{-1}|\sin t|^{-1}\iint v(a,b)\exp\left(i\frac{(xy+ab)\cos t-(ay+xb)}{\sin t}\right)\, da\,db.
\end{equation}
The right-hand side is known as ``the gyrator transform'' of $v$
\cite{R:RodrAlievCalvo}. The benefit of this expression is that we
may now extend the solution $u(t,x,y)$ to $t\in (0,\pi)$. For
$t\in (\pi/2,\pi)$ we may again use the method of stationary phase
to return to the expression (\ref{E:gyraAA}). There is no phase
shift, since in this example $\text{sgn\,Q}=0$.

Hence we have completely determined $u(t,x,y)$: for $t\in\Rbb\backslash\{(2k+1)\pi/2;\, k\in\mathbb{Z}\}$ it is given by (\ref{E:gyraAA}), and for $t\in\Rbb\backslash\{k\pi;\, k\in\mathbb{Z}\}$ it is given by (\ref{E:gyraBB}). This is analogous to the standard parametrization of the circle by four charts of graph coordinates. In fact, it is not only analogous, but intimately related; the exceptional points for (\ref{E:gyraAA}) (resp. (\ref{E:gyraBB})) are precisely those for which the Lagrangian submanifolds, swept out by the Hamilton flow of $T_{1}$, have degenerate projections onto the $(\xi,\eta)$ plane (resp. $(x,y)$ plane). (For more on this phenomenon, one may consult Duistermaat's beautiful review article \cite{R:DuistermaatUnfolding}.)

This unitary group has an important application when $t=\pm \pi/4$: it takes Hermite-Gaussian modes to Laguerre-Gaussian modes. For this we recall the definitions of the extended Wigner transform
\begin{equation*}
    \tilde{W}F(x,y)=\frac{1}{\sqrt{2\pi}}\int e^{ipy}F\left(\frac{x+p}{\sqrt{2}},\frac{x-p}{\sqrt{2}}\right)dp
\end{equation*}
and of the (renormalized) partial Fourier transform
\begin{equation*}
    \mathcal{F}_{2}F(x,y)=\frac{1}{\sqrt{2\pi}}\int e^{-ipy}F(x,p)dp.
\end{equation*}
Then one may check that
\begin{equation*}
    u(\pi/4,x,y)=\tilde{W}(\mathcal{F}_{2}v)(x,-y).
\end{equation*}

In the special case when the initial condition is $v=h_{mn}$, the $(m,n)^{th}$ Hermite-Gaussian mode, we have
\begin{equation*}
    u(\pi/4,x,y)=(-i)^{n}\tilde{W}(h_{nm})(x,y).
\end{equation*}
Moreover, by taking complex conjugates, we have
\begin{equation*}
    u(-\pi/4,x,y)=i^{n}\tilde{W}(h_{mn})(x,y).
\end{equation*}
And we recall that $\tilde{W}(h_{mn})$ is precisely the $(m,n)^{th}$ Laguerre-Gaussian mode \cite{R:VVLG}.

\vspace{12pt}

\section{The FIO Representation of a Non-Gaussian Transformation}\label{S:FIOnonGauss}

We now turn to the more difficult non-Gaussian transformations. The four non-Gaussian transformations used by Calvo and Pic\'{o}n may all be treated by the same methods, so we focus on the operator
\begin{equation*}
    \calt_{4}=\frac{1}{2}(\crx+\anx-\crx\crx\anx-\crx\anx\anx
    -\crx\cry\any-\anx\cry\any),
\end{equation*}
where the creation and annihilation operators are semiclassical. Moreover, we will make a slight simplification in order to remove inessential complications. Since the operator $\calt_{4}$ acts very simply in the $y$ variable, we may instead study the operator
\begin{equation*}
    \calt_{0}=\frac{1}{2}(\crx+\anx-\crx\crx\anx-\crx\anx\anx)
\end{equation*}
acting on functions of $x$ only. The following arguments may be repeated for $\calt_{4}$, but with some slight changes as outlined in Section~\ref{S:tilt4}.

We renormalize $\calt_{0}$ in order to have a semiclassical differential operator:
\begin{equation*}
    \begin{aligned}
    \tilt_{0}&=-2^{-1/2}h^{3/2}\calt_{0}\\
    &=\frac{1}{2}x(x^{2}+(hD_{x})^{2})-\frac{3}{2}hx-\frac{1}{2}ih^{2}D_{x}.
    \end{aligned}
\end{equation*}

The method we use to treat this operator is the same as in the previous sections, but there are some complications. The difficulties may be treated by the general theory, but the solution is not as explicit and exact. Here we wish to remain in the semiclassical setting, and we only expect an asymptotic solution to the corresponding evolution equation:
\begin{equation}\label{E:evoltilt}
    \begin{cases}
    \left(hD_{t}+\tilt_{0}\right)u=\mathcal{O}(h^{\infty})\\
    u|_{t=0}=v.
    \end{cases}
\end{equation}
We will allow the initial conditions $v$ to depend on $h$.
Moreover, for bounded times Duhamel's Principle shows that the
semiclassical propagator differs from the exact unitary propagator
by $\mathcal{O}(h^{\infty})$ (after choosing a self-adjoint
realization of the generator).

The semiclassical Weyl symbol of $\tilt_{0}$ is
\begin{equation*}
    p_{0}(x,\xi;h)=\frac{1}{2}x(x^{2}+\xi^{2})-\frac{3}{2}hx.
\end{equation*}
so then Hamilton's equations are
\begin{equation}\label{E:HamEqp0}
    \begin{cases}
    \dot{x}=x\xi\\
    \dot{\xi}=-x^{2}-\frac{1}{2}(x^{2}+\xi^{2})+\frac{3}{2}h.
    \end{cases}
\end{equation}
Here we have the conserved quantity
\begin{equation*}
    C_{0}=p_{0}(x,\xi;h).
\end{equation*}

Suppose first that $C_{0}\neq 0$, so that, in particular, $x(t)\neq 0$ for all $t$. Letting $$w=\frac{1}{x},$$ we have from (\ref{E:HamEqp0}) the following differential equation for $w$:
\begin{equation}\label{E:diffeqw}
    (\dot{w})^{2}=2C_{0}w^{3}+3hw^{2}-1.
\end{equation}
If $C_{0}\neq 0$, this is essentially the same as the differential equation for the famous Weierstrass $\wp$-function:
\begin{equation*}
    \left[\wp^{\prime}(z)\right]^{2}=4\left[\wp(z)\right]^{3}-g_{2}\wp(z)-g_{3}.
\end{equation*}
The two constants $g_{2}$ and $g_{3}$ are the so-called ``invariants''. We may then solve Hamilton's equations, giving the Hamilton flow in terms of the Weierstrass $\wp$-function.

To be explicit, for $C_{0}\neq 0$ we have
\begin{equation}\label{E:xflow}
    x(t)=\frac{1}{2}C_{0}\left(\wp(t+t_{0})-\frac{1}{4}h\right)^{-1}
\end{equation}
where $t_{0}$ is either an arbitrary real constant or an arbitrary real constant plus $\frac{1}{2}\omega_{1}$, the purely imaginary half-period of $\wp$ (see Appendix). Here $\wp$ is the Weierstrass $\wp$-function associated to the invariants
\begin{equation*}
    g_{2}=\frac{3}{4}h^{2} \qquad\text{and}\qquad g_{3}=\frac{1}{4}C_{0}^{2}-\frac{1}{8}h^{3}.
\end{equation*}

We then also have
\begin{equation}\label{E:xiflow}
    \begin{aligned}
    \xi(t)&=\frac{\dot{x}(t)}{x(t)}\\
    &=\frac{-\dot{\wp}(t+t_{0})}{\wp(t+t_{0})-\frac{1}{4}h}.
    \end{aligned}
\end{equation}
When $h=0$ and $C_{0}\neq 0$, $\xi(t)$ is always strictly decreasing, which follows simply from Hamilton's equations (\ref{E:HamEqp0}). However, when $h>0$ we have more complicated behavior, as shown in Figure \ref{F:hflow}. There is a pocket of radius $\sqrt{3h}=\sqrt{\frac{3}{2}}w_{0}$, where $w_{0}$ is in practice the radius of the laser beam's waist.

\vspace{12pt}

\begin{figure}
\begin{center}
\epsfig{file=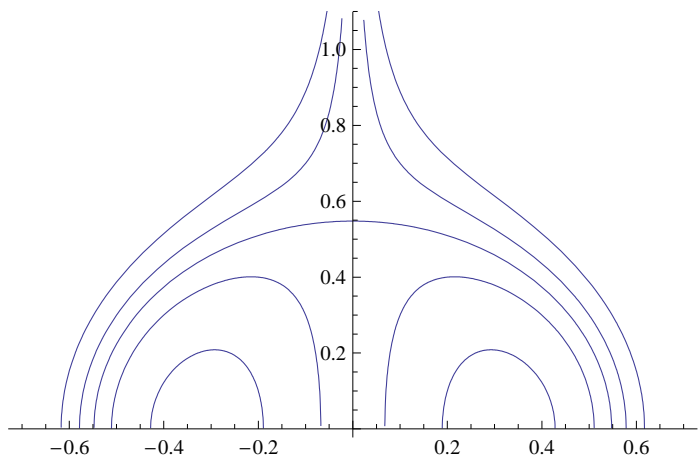,height=6cm} \caption{Hamilton flow lines in the
$(x,\xi)$ upper half plane in the case $h=1/10$, with variable $C_{0}$. The lower half
plane is obtained by symmetry.}\label{F:hflow}
\end{center}
\end{figure}

\vspace{12pt}

We have very different behavior when $C_{0}=0$. Depending on the initial conditions, we have one of the three following cases:
\begin{equation*}
    \begin{cases}
    x(t)= 0\quad\forall t\\
    \xi(t)=\sqrt{3h}\coth\left(\frac{\sqrt{3h}}{2}(t+t_{0})\right),
    \end{cases}
\end{equation*}
\begin{equation*}
    \begin{cases}
    x(t)= 0\quad\forall t\\
    \xi(t)=\sqrt{3h}\tanh\left(\frac{\sqrt{3h}}{2}(t+t_{0})\right),
    \end{cases}
\end{equation*}
or
\begin{equation*}
   \begin{cases}
   x(t)= \pm\sqrt{3h}\sech(\sqrt{3h}(t+t_{0}))\\
   \xi(t)=-\sqrt{3h}\tanh\left(\sqrt{3h}(t+t_{0})\right).
   \end{cases}
\end{equation*}
And we have the stationary points $(x,\xi)=(0,\pm\sqrt{3h})$ and $(x,\xi)=(\pm\sqrt{h},0)$.

\vspace{24pt}

Now we look for a solution of the equation (\ref{E:evoltilt}) of the form
\begin{equation}\label{E:T0ansatz}
    u=U_{t}v(x)=(2\pi h)^{-1}\int e^{\frac{i}{h}\varphi(t,x,\xi)}a(t,x,\xi;h)\hat{v}(\xi)\, d\xi.
\end{equation}
Here we are using the semiclassical Fourier transform, given by
\begin{equation*}
    \hat{v}(\xi)=\int e^{-\frac{i}{h}x\xi}v(x)\,dx.
\end{equation*}
We will even allow the phase $\varphi$ to depend on $h$, since we prefer to study the Weyl symbol of $\tilt_{0}$,
rather than the principal symbol only.

We take $\varphi$ to be a solution of the now $h$-dependent eikonal equation
\begin{equation}\label{E:eikonalT0}
    \frac{\partial\varphi}{\partial t}+p_{0}\left(x,\frac{\partial\varphi}{\partial x};h\right)=0.
\end{equation}
We may treat this using the same method as before, using Hamilton-Jacobi theory (see, for example, Theorem 5.5 of \cite{R:GrigisSjostrand}): given any $(x_{0},\xi_{0})\in\Rbb^{2}$, there exists a real-valued smooth function $\varphi(t,x;h)$ (with $(\xi_{0}$ considered as a parameter) defined in a neighborhood of $(0,x_{0})$ such that (\ref{E:eikonalT0}) is solved and such that
\begin{align*}
    \varphi(0,x)&=x\xi_{0},\\
    \frac{\partial\varphi}{\partial t}(0,x_{0})&=-\frac{1}{2}x_{0}(x_{0}^{2}+\xi_{0}^{2})+\frac{3}{2}hx_{0},  &&\text{and}\\
    \frac{\partial\varphi}{\partial x}(0,x_{0})&=\xi_{0}.
\end{align*}
However we have not yet managed to get a closed-form expression for the Hamilton flow (solely in terms of the initial conditions $x_{0}$ and $\xi_{0}$) or for the phase $\varphi$. Perhaps this is best suited for Laurent expansion methods and numerical methods, but we leave open the possibility that one may find closed-form solutions by brute force.

\vspace{12pt}

We next construct the amplitude $a$. For (\ref{E:T0ansatz}) to be a solution of the equation (\ref{E:evoltilt}), the amplitude $a$ must satisfy
\begin{equation*}
    (hD_{t}+\tilt_{0})(e^{\frac{i\varphi}{h}}a)=\mathcal{O}(h^{\infty}),
\end{equation*}
which is equivalent to
\begin{equation*}
    \left(\frac{\partial\varphi}{\partial t}+hD_{t}+e^{-\frac{i\varphi}{h}}\tilt_{0}e^{\frac{i\varphi}{h}}\right)a=\mathcal{O}(h^{\infty}).
\end{equation*}
And since $\varphi$ is a solution of the eikonal equation, we then have
\begin{equation}\label{E:ampl2}
    \left(hD_{t}+e^{-\frac{i\varphi}{h}}\tilt_{0}e^{\frac{i\varphi}{h}}-p_{0}\left(x,\frac{\partial\varphi}{\partial x};h\right)\right)a=\mathcal{O}(h^{\infty}).
\end{equation}
We simply compute
\begin{equation*}
    \begin{aligned}
    e^{-\frac{i\varphi}{h}}\tilt_{0}&e^{\frac{i\varphi}{h}}-p_{0}\left(x,\frac{\partial\varphi}{\partial x};h\right)\\
    &=-\frac{1}{2}ih\left(x\frac{\partial^{2}\varphi}{\partial x^{2}}+\frac{\partial\varphi}{\partial x}\right)+x\frac{\partial\varphi}{\partial x}hD_{x}+\frac{1}{2}x(hD_{x})^{2}-\frac{1}{2}ih^{2}D_{x}.
    \end{aligned}
\end{equation*}
If we let
\begin{equation*}
    V_{t}:=x\frac{\partial\varphi}{\partial x}\frac{\partial}{\partial x}
\end{equation*}
(which we recall depends on $h$), then (\ref{E:ampl2}) becomes
\begin{equation*}
    \left(\left(\frac{\partial}{\partial t}+\frac{1}{2}\dive V_{t}+V_{t}\right)+\frac{h}{2i}\left(\frac{\partial}{\partial x}\circ x\circ\frac{\partial}{\partial x}\right)\right)a=\mathcal{O}(h^{\infty}).
\end{equation*}
It is then natural to define
\begin{equation*}
    R_{1}:=\frac{\partial}{\partial t}+\frac{1}{2}\dive V_{t}+V_{t}
\end{equation*}
and
\begin{equation*}
    R_{2}:=\frac{1}{2i}\frac{\partial}{\partial x}\circ x\circ\frac{\partial}{\partial x}.
\end{equation*}

First considering a finite sum, we have
\begin{equation*}
    (R_{1}+hR_{2})\sum_{n=0}^{N}h^{n}a_{n}=R_{1}a_{0}+\sum_{n=1}^{N}h^{n}[R_{1}a_{n}+R_{2}a_{n-1}]+h^{N+1}R_{2}a_{N},
\end{equation*}
so  we wish to solve the transport equations
\begin{equation*}
    \begin{cases}
    R_{1}a_{0}=0\\
    R_{1}a_{n}+R_{2}a_{n-1}=0,\quad n\geq 1
    \end{cases}
\end{equation*}
with the initial conditions
\begin{equation*}
    \begin{cases}
    a_{0}(0,x,\xi)=1\\
    a_{n}(0,x,\xi)=0,\quad n\geq 1.
    \end{cases}
\end{equation*}

\vspace{12pt}

There is an elegant solution to the equation
\begin{equation}\label{E:a0eqn}
    \left(\frac{\partial}{\partial t}+\frac{1}{2}\dive V_{t}+V_{t}\right)a_{0}=0
\end{equation}
which gives interesting geometric information (see \cite{R:ZwSC} and Appendix A of \cite{R:HelffSjIII}). It is due to the fact that
\begin{equation*}
    V_{t}=\frac{\partial p_{0}}{\partial\xi}\left(x,\frac{\partial\varphi}{\partial x};h\right)\frac{\partial}{\partial x}.
\end{equation*}
We let
\begin{equation*}
    \Lambda_{t,\xi}:=\left\{\left(x,\frac{\partial\varphi}{\partial x}(t,x,\xi;h)\right)\right\}
\end{equation*}
and we denote the ($h$-dependent) Hamiltonian flow of $p_{0}$ by $\kappa$ (generated by the Hamiltonian vector field $H_{p_{0}}$), considered as taking
\begin{equation*}
    \kappa_{s,t}:\qquad\Lambda_{t-s,\xi}\rightarrow\Lambda_{t,\xi}.
\end{equation*}
We then note that
\begin{equation*}
    \frac{d}{ds}\kappa_{s,t}^{*}f\Big{|}_{s=0}=H_{p_{0}}\Big{|}_{\Lambda_{t,\xi}}f=V_{t}f
\end{equation*}
for $f\in C^{\infty}$.

Considering
\begin{equation*}
    a_{0}(t,x,\xi)|dx|^{1/2}
\end{equation*}
as a half-density on $\Lambda_{t,\xi}$, (\ref{E:a0eqn}) then becomes
\begin{equation*}
    \frac{d}{dt}\kappa_{t}^{*}(a_{0}|dx|^{1/2})=\left(\frac{\partial}{\partial t}+\mathcal{L}_{V_{t}}\right)(a_{0}|dx|^{1/2})=0
\end{equation*}
where $\mathcal{L}_{V_{t}}$ denotes the Lie derivative. That is, the amplitude, interpreted as a half-density, is invariant under the flow. This is the same as
\begin{equation*}
    \kappa_{t}^{*}(a_{0}(t,x,\xi)|dx|^{1/2}\Big{|}_{\Lambda_{t,\xi}})=|dx|^{1/2}\Big{|}_{\Lambda_{0,\xi}}.
\end{equation*}
Hence
\begin{equation*}
    \kappa_{t}^{*}a_{0}=|\partial\kappa_{t}|^{-1/2},
\end{equation*}
where $\partial\kappa_{t}$ denotes the differential of the mapping $\kappa_{t}$.

Now we recall that the mapping $\kappa_{t}$ may be described in terms of its generating function $\varphi$, so that its inverse is given by
\begin{equation*}
    \kappa_{t}^{-1}:\quad \left(x,\frac{\partial\varphi}{\partial x}(t,x,\xi;h)\right)
    \rightarrow \left(\frac{\partial\varphi}{\partial \xi}(t,x,\xi;h),\xi\right).
\end{equation*}
Hence
\begin{equation*}
    \partial(\kappa_{t}^{-1}\Big{|}_{\Lambda_{t,\xi}})
    =\frac{\partial^{2}\varphi}{\partial x\partial \xi}.
\end{equation*}
So finally we see that
\begin{equation*}
    a_{0}(t,x,\xi)=\left(
        \frac{\partial^{2}\varphi}{\partial x\partial \xi}\right)^{1/2}.
\end{equation*}

\vspace{24pt}

\begin{remark}
\emph{
The same argument works for more general operators \cite{R:ZwSC}, \cite{R:HelffSjIII}. In particular, the simple operators in Sections \ref{S:WarmUp} and \ref{S:Gyrator} fit into this framework. In Section \ref{S:WarmUp} we had the phase
\begin{equation*}
    \varphi(t,x,y,\xi,\eta)=(xy-\xi\eta)\tanh t+(x\xi+y\eta)\sech t,
\end{equation*}
so that
\begin{equation*}
    \det
    \begin{pmatrix}
        \frac{\partial^{2}\varphi}{\partial x\partial \xi}&\frac{\partial^{2}\varphi}{\partial x\partial \eta}\\
        \frac{\partial^{2}\varphi}{\partial y\partial \xi}&\frac{\partial^{2}\varphi}{\partial y\partial \eta}
    \end{pmatrix}
    =\sech^{2}t.
\end{equation*}
And in Section \ref{S:Gyrator} we had the phase
\begin{equation*}
    \varphi(t,x,y,\xi,\eta)=(x\xi+y\eta)\sec t-(xy+\xi\eta)\tan t,
\end{equation*}
so that
\begin{equation*}
    \det
    \begin{pmatrix}
        \frac{\partial^{2}\varphi}{\partial x\partial \xi}&\frac{\partial^{2}\varphi}{\partial x\partial \eta}\\
        \frac{\partial^{2}\varphi}{\partial y\partial \xi}&\frac{\partial^{2}\varphi}{\partial y\partial \eta}
    \end{pmatrix}
    =\sec^{2}t.
\end{equation*}
}
\end{remark}

\vspace{24pt}

We may also solve the higher-order transport equations to construct the full amplitude
\begin{equation*}
    a\sim\sum_{n=0}^{\infty}h^{n}a_{n}
\end{equation*}
in the sense of Borel summation.

\vspace{12pt}

To rigorously solve (\ref{E:evoltilt}) one must control the error. For this one uses cutoff functions in phase space and restricts to initial conditions $v$ that are appropriately localized in phase space (for a textbook presentation, see \cite{R:ZwSC}). For example, finite sums of Hermite-Gaussian modes are localized to the origin in phase space, and so are localized to the flow-invariant disc. So fortunately we have good behavior for the physically most relevant initial conditions.

\vspace{12pt}

\section{The Operator $\tilt_{4}$}\label{S:tilt4}

For reference, in this section we give the Hamilton flow for the operator $\tilt_{4}$. The methods of the previous section may be applied, with some slight modifications due to the additional parameters.

The semiclassical Weyl symbol of $\tilt_{4}$ is
\begin{equation*}
    p_{4}(x,y,\xi,\eta;h)=\frac{1}{2}x(x^{2}+y^{2}+\xi^{2}+\eta^{2})-\frac{5}{2}hx,
\end{equation*}
so then Hamilton's equations are
\begin{equation}\label{E:HamEqp4}
    \begin{cases}
    \dot{x}=x\xi\\
    \dot{y}=x\eta\\
    \dot{\xi}=-x^{2}-\frac{1}{2}(x^{2}+y^{2}+\xi^{2}+\eta^{2})+\frac{5}{2}h\\
    \dot{\eta}=-xy.
    \end{cases}
\end{equation}
Here we have the two conserved quantities
\begin{equation*}
    \begin{cases}
    C_{0}=p_{4}(x,y,\xi,\eta)\\
    C_{1}^{2}=y^{2}+\eta^{2}.
    \end{cases}
\end{equation*}

For $C_{0}\neq 0$ we have the solution
\begin{equation}\label{E:xflowDim4}
    x(t)=\frac{1}{2}C_{0}\left(\wp(t+t_{0})+\frac{1}{12}(C_{1}^{2}-5h)\right)^{-1}
\end{equation}
where $t_{0}$ is either an arbitrary real constant or an arbitrary real constant plus $\frac{1}{2}\omega_{1}$, the purely imaginary half-period of $\wp$ (see Appendix). Here $\wp$ is the Weierstrass $\wp$-function associated to the invariants
\begin{equation*}
    g_{2}=\frac{1}{12}(C_{1}^{2}-5h)^{2} \qquad\text{and}\qquad g_{3}=\frac{1}{216}(C_{1}^{2}-5h)^{3}+\frac{1}{4}C_{0}^{2}.
\end{equation*}

We then also have
\begin{equation}\label{E:xiflowDim4}
    \xi(t)=\frac{-\dot{\wp}(t+t_{0})}{\wp(t+t_{0})+\frac{1}{12}(C_{1}^{2}-5h)},
\end{equation}

As for $y$, we have the equation
\begin{equation*}
    \frac{1}{x}\frac{d}{dt}\left(\frac{1}{x}\dot{y}\right)+y=0
\end{equation*}
where $x$ is as in (\ref{E:xflowDim4}). The solution is given by
\begin{equation*}
    y(t)=y_{0}\cos\int^{t}_{0}x(s)\, ds +\eta_{0}\sin\int^{t}_{0}x(s)\, ds
\end{equation*}
and hence
\begin{equation*}
    \eta(t)=-y_{0}\sin\int^{t}_{0}x(s)\, ds  +\eta_{0}\cos\int^{t}_{0}x(s)\, ds
\end{equation*}
where of course $$y_{0}^{2}+\eta_{0}^{2}=C_{1}^{2}.$$

We have very different behavior when $C_{0}=0$, which breaks into multiple cases. If $C_{1}^{2}\geq 5h$, then $x(t)=0$ for all $t$. We then have $y(t)\equiv y_{0}$ and $\eta(t)\equiv \eta_{0}$. As for $\xi$, if $C_{1}^{2}>5h$, then
\begin{equation*}
    \xi(t)=-\sqrt{C_{1}^{2}-5h}\,\,\tan\left(\frac{1}{2}\sqrt{C_{1}^{2}-5h}\,\,(t+t_{0})\right),
\end{equation*}
where $t_{0}$ is an arbitrary constant. And if $C_{1}^{2}=5h$, then either $\xi\equiv 0$ or $\xi(t)=\frac{2}{t+t_{0}}$.

On the other hand, if $C_{1}^{2}<5h$, we have one of the following three cases, depending on the initial conditions. Either
\begin{equation*}
    \begin{aligned}
    x(t)&=\pm\sqrt{5h-C_{1}^{2}}\,\,\sech\left(\sqrt{5h-C_{1}^{2}}\,\,(t+t_{0})\right)\\
    \xi(t)&=-\sqrt{5h-C_{1}^{2}}\,\,\tanh\left(\sqrt{5h-C_{1}^{2}}\,\,(t+t_{0})\right)\\
    y(t)&=y_{0}\cos\int^{t}_{0}x(s)\, ds +\eta_{0}\sin\int^{t}_{0}x(s)\, ds\\
    \eta(t)&=-y_{0}\sin\int^{t}_{0}x(s)\, ds  +\eta_{0}\cos\int^{t}_{0}x(s)\, ds,
    \end{aligned}
\end{equation*}
or, in the case where $x_{0}=0$,
\begin{equation*}
    \begin{aligned}
    x(t)&\equiv 0,\qquad y(t)\equiv y_{0},\qquad \eta(t)\equiv \eta_{0},\qquad\text{and}\\
    \xi(t)&=\sqrt{5h-C_{1}^{2}}\,\,\tanh\left(\frac{1}{2}\sqrt{5h-C_{1}^{2}}\,\,(t+t_{0})\right),
    \end{aligned}
\end{equation*}
or
\begin{equation*}
    \begin{aligned}
    x(t)&\equiv 0,\qquad y(t)\equiv y_{0},\qquad \eta(t)\equiv \eta_{0},\qquad\text{and}\\
    \xi(t)&=\sqrt{5h-C_{1}^{2}}\,\,\coth\left(\frac{1}{2}\sqrt{5h-C_{1}^{2}}\,\,(t+t_{0})\right).
    \end{aligned}
\end{equation*}

And we have the stationary points
\begin{equation*}
    (x,y,\xi,\eta)=(\pm\sqrt{\frac{5}{3}h},0,0,0)
\end{equation*}
and, for $y_{0}^{2}+\eta_{0}^{2}=C_{1}^{2}\leq 5h$,
\begin{equation*}
    (x,y,\xi,\eta)=(0,y_{0},\pm\sqrt{5h-C_{1}^{2}},\eta_{0}).
\end{equation*}

\vspace{12pt}

\section{Egorov's Theorem}\label{S:Egorov}

When dealing with the semiclassical quantization of a quadratic polynomial, one may use the metaplectic representation. This is the method discussed, for example, in the work of Calvo and Pic\'{o}n \cite{R:CalvoPicon}. Here we will simply define the metaplectic representation and note the connection with Egorov's theorem.

Following Folland \cite{R:Folland}, we write the Schr\"{o}dinger representation of the Heisenberg group as
\begin{equation*}
    \rho(p,q,t)=e^{2\pi i(pD+qX+tI)},
\end{equation*}
and we write the metaplectic representation as $\mu$. The metaplectic representation is sometimes described in the language of representation theory as follows. Let $\mathcal{T}$ denote the group of automorphisms of the Heisenberg group $\mathbb{H}_{n}$ that leave the center pointwise fixed. If $T\in\mathcal{T}$, then the composition $\rho\circ T$ is a new irreducible unitary representation of the Heisenberg group, nontrivial on the center, so the Stone-von Neumann theorem \cite{R:Folland} says that there exists a unitary operator $\mu(T)$ on $L^{2}(\Rbb^{n})$ such that
\begin{equation*}
    \mu(T)\rho(X)\mu(T)^{-1}=\rho\circ T(X),\qquad X\in\mathbb{H}_{n}.
\end{equation*}

Let us compare this to (one version of) Egorov's Theorem:
\begin{equation*}
    e^{it\tilt_{4}/h}\text{Op}^{W}_{h}(q)e^{-it\tilt_{4}/h}=\text{Op}^{W}_{h}(q_{t})
\end{equation*}
with $q_{t}=q\circ\kappa_{t}+\mathcal{O}(h^{2})$. Since we are
dealing with unitary operators and the Weyl quantization, we have
an error of order $\mathcal{O}(h^{2})$ rather than the more usual
$\mathcal{O}(h)$ (see Appendix~A of \cite{R:HelffSjIII} or
Section~2 of \cite{R:HitrSjI}). This is in fact one of the main
reasons for our using $h$-dependent canonical transformations.
Here $\kappa_{t}=\exp(tH_{p_{4}})$ is the ($h$-dependent) Hamilton
flow associated to $\tilt_{4}$. Egorov's theorem is a way of
justifying the intuition that ``the Fourier integral operator
$e^{itH/h}$ quantizes the Hamilton flow of $H$''. And there is a
wider class of Fourier integral operators that may be considered
as quantizations of canonical transformations
\cite{R:GrigisSjostrand}.

\vspace{12pt}

This has a useful consequence for optics, where the Wigner transform is a widely used tool for studying phase space properties of functions. For this we define the standard $n$-dimensional semiclassical Wigner transform by
\begin{equation*}
    W(f,g)(x,\xi)=\int e^{-\frac{i\xi p}{h}}f\left(x+\frac{1}{2}p\right)\overline{g\left(x-\frac{1}{2}p\right)}\, dp.
\end{equation*}
One may check that for any symbol $\sigma$ one has
\begin{equation*}
    \langle\Op(\sigma)f|g\rangle=\iint\sigma(x,\xi)W(f,g)(x,\xi)\,dx\,d\xi.
\end{equation*}
Then, for example, using $U_{t}$, the semiclassical approximate
propagator for $\tilt_{4}$ (with Weyl symbol $p_{4}$, having the
Hamilton flow $\kappa_{t}$), one has
\begin{equation*}
    \begin{aligned}
    \langle\Op(\sigma)U_{t}f|U_{t}g\rangle
    &=\langle e^{\frac{it\tilt_{4}}{h}}\Op(\sigma)e^{-\frac{it\tilt_{4}}{h}}f|g\rangle+\mathcal{O}(h^{\infty})\\
    &=\langle\Op(\sigma_{t})f|g\rangle+\mathcal{O}(h^{\infty}),
    \end{aligned}
\end{equation*}
where $\sigma_{t}=\sigma\circ\kappa_{t}+\mathcal{O}(h^{2})$. Hence
\begin{equation*}
    \begin{aligned}
    \iint\sigma(x,\xi)W(U_{t}f,U_{t}g)(x,\xi)\,dx\,d\xi
    &=\iint\sigma(\kappa_{t}(x,\xi))W(f,g)(x,\xi)\,dx\,d\xi+\mathcal{O}(h^{2})\\
    &=\iint\sigma(x,\xi)W(f,g)(\kappa_{-t}(x,\xi))\,dx\,d\xi+\mathcal{O}(h^{2}).
    \end{aligned}
\end{equation*}
Since this is true for all symbols $\sigma$, we thus have
\begin{equation*}
    W(U_{t}f,U_{t}g)=W(f,g)\circ\kappa_{-t}+\mathcal{O}(h^{2}).
\end{equation*}
If we were dealing with a metaplectic operator, the transformation $\kappa$ would then be a linear symplectic transformation, and the identity would be exact.

\vspace{12pt}

\section{Additional Remarks}\label{S:Concl}

     \begin{remark} \emph{
     So far we have concerned ourselves with the manipulation of Hermite-Gaussian modes: those two-dimensional modes generated by applying the $x$ and $y$ creation operators $\crx$ and $\cry$ to the fundamental mode $\pi^{-1/2}e^{-\frac{1}{2}(x^{2}+y^{2})}$ (one may alternatively use the semiclassical version). But one may also just as easily consider the manipulation of Laguerre-Gaussian modes, generated by applying the creation operators
     \begin{equation*}
        \hat{A}_{\pm}^{\dag}=2^{-1/2}(\crx\pm i\cry)
     \end{equation*}
     to the fundamental mode. It is no more difficult to use the above methods, since the extended Wigner transform intertwines the two classes of creation operators \cite{R:VVLG}.}
     \end{remark}

    \vspace{12pt}

    \begin{remark}
    \emph{
    We have seen that the Hamilton flows of
    $\tilt_{4}$, $\tilt_{5}$, $\tilt_{6}$, and $\tilt_{7}$ blow up in
    finite time. One may now ask: what are the consequences for the
    propagators themselves? As noted by Maciej Zworski \cite{R:ZwPersC},
    the propagators (the non-Gaussian transformations)
    may cause certain initial conditions to develop singularities in
    finite time, since this is precisely what happens for simpler
    operators.
    }

    \emph{
    We consider the following simplification of $\tilt_{5}$:
    \begin{equation*}
        Q_{1}=\frac{1}{2}(x^{2}D_{x}+D_{x}\circ x^{2}).
    \end{equation*}
    One may explicitly find the deficiency subspaces, the Hamilton flow, and the propagator (which may be viewed as a Fourier integral operator). And there exist smooth $L^{2}(\Rbb)$ initial conditions, namely,
    \begin{equation*}
        u_{0}(x)=\langle x \rangle^{-\alpha},\qquad \frac{1}{2}<\alpha<1,
    \end{equation*}
    which develop singularities, traveling along the Hamilton flow.
    }

    \emph{
    For an example closer yet to $\tilt_{5}$, one may consider
    \begin{equation*}
        \begin{aligned}
        Q_{2}&=\frac{1}{2}(x^{2}hD_{x}+hD_{x}\circ x^{2})-5h^{2}D_{x}\\
        &=x^{2}hD_{x}-ihx-5h^{2}D_{x}.
        \end{aligned}
    \end{equation*}
    For reference, we note that the propagator is
    \begin{equation*}
        e^{-\frac{itQ_{2}}{h}}u_{0}(t,x)=a(x\sinh(at)+a\cosh(at))^{-1}u_{0}\left(a\left(\frac{x\cosh(at)+a\sinh(at)}{x\sinh(at)+a\cosh(at)}\right)\right)
    \end{equation*}
    where $a=\sqrt{5h}$.
    }

    \emph{To study creation of singularities in general, one might use the methods of Jared Wunsch \cite{R:WunschGrow}.}

    \end{remark}

\vspace{12pt}

\section*{Acknowledgements}

The author benefitted from conversations with Hamid Hezari at the
2008 Clay Math Institute Summer School, with Michael Hitrik, James
Ralston, Gregory Eskin, and William Duke at UCLA, and with Maciej
Zworski and Jared Wunsch at Berkeley/MSRI. The author also thanks
Gabriel F. Calvo and Antonio Pic\'{o}n for contacting him
personally and suggesting this problem.

\vspace{12pt}

\appendix
\section{The Weierstrass $\wp$-Function}

In Section \ref{S:FIOnonGauss} the Weierstrass $\wp$-function
played a central role, as elliptic functions were found to
parametrize the ($h$-dependent) Hamilton flow of
\begin{equation*}
    p_{0}(x,\xi;h)=\frac{1}{2}x(x^{2}+\xi^{2})-\frac{3}{2}hx.
\end{equation*}
Here we briefly describe some useful facts about the $\wp$-function, in the context of our problem. In what follows, we occasionally state (but do not prove) standard results, taken from the textbooks of Ahlfors \cite{R:Ahlfors} and Koblitz \cite{R:Koblitz}.

The Weierstrass $\wp$-function is a doubly periodic meromorphic function on the complex plane, say, with periods $\omega_{1},\omega_{2}\in\Cbb$, with a double pole at each point of the period lattice, including the origin. And it is a solution of the differential equation
\begin{equation}\label{E:WPDiffeq}
    [\wp^{\prime}(z)]^{2}=4[\wp(z)]^{3}-g_{2}\wp(z)-g_{3},
\end{equation}
where the ``invariants'' $g_{2}$ and $g_{3}$ characterize $\wp$ just as well as $\omega_{1}$ and $\omega_{2}$. In our case,
\begin{equation*}
    g_{2}=\frac{3}{4}h^{2}\quad\text{and}\quad g_{3}=\frac{1}{4}C_{0}^{2}-\frac{1}{8}h^{3}.
\end{equation*}
Here $C_{0}:=p_{0}(x,\xi;h)$ is conserved under the flow. Since the differential equation has constant coefficients, $\wp(z+z_{0})$ is also a solution, for any $z_{0}\in\Cbb$.

One may factorize the cubic polynomial (\ref{E:WPDiffeq}):
\begin{equation*}
    \wp^{\prime}(z)^{2}=4(\wp(z)-e_{1})(\wp(z)-e_{2})(\wp(z)-e_{3}).
\end{equation*}
It is a familiar fact that the roots $e_{1},e_{2},e_{3}$ of $4x^{3}-g_{2}x-g_{3}$ are distinct if and only if the discriminant is nonzero:
\begin{equation*}
    \Delta:=g_{2}^{3}-27 g_{3}^{2}\neq 0.
\end{equation*}
Moreover, all the roots $e_{1}$, $e_{2}$, $e_{3}$ are real if and only if $g_{2}$ and $g_{3}$ are real and $\Delta>0$. In our case,
\begin{equation*}
    \Delta=\frac{27}{16}C_{0}^{2}(h^{3}-C_{0}^{2}).
\end{equation*}
So if we ignore the trivial case $C_{0}=0$ (see Section \ref{S:FIOnonGauss}), we have that $e_{1},e_{2},e_{3}$ are distinct if and only if $C_{0}^{2}\neq h^{3}$, and moreover $e_{1},e_{2},e_{3}$ are real if and only if $C_{0}^{2} < h^{3}$.

Suppose that we are in the case $\Delta>0$. We then order the $e_{i}$ such that $e_{1}<e_{3}<e_{2}$. Then one may choose the periods $\omega_{1}$ and $\omega_{2}$ of $\wp$ to be given by
\begin{equation}\label{E:periodintOne}
    \frac{1}{2}\omega_{1}=i\int_{-\infty}^{e_{1}}\frac{dt}{\sqrt{g_{3}+g_{2}t-4t^{3}}}
\end{equation}
and
\begin{equation}\label{E:periodintTwo}
    \frac{1}{2}\omega_{2}=\int_{e_{2}}^{\infty}\frac{dt}{\sqrt{4t^{3}-g_{2}t-g_{3}}},
\end{equation}
where we take the positive branch of the square root and integrate along the real axis. Hence the fundamental domain of $\wp$ in the complex plane is a rectangle with its opposite edges identified.

In our case, in Section \ref{S:FIOnonGauss}, it is clear that the unbounded flow lines are described by restricting $\wp$ to the real line, since the poles lie on the real line. Also, it is clear that the nondegenerate flow loops are described by restricting $\wp$ to a line of the form
\begin{equation*}
    L_{\alpha}=\{t+\alpha;\,t\in\Rbb\}, \quad\text{for some $\alpha\in\Cbb$.}
\end{equation*}

To determine the value of $\alpha$, we make use of the addition formula
\begin{equation*}
    \wp(z+u)=-\wp(z)-\wp(u)+\frac{1}{4}\left(\frac{\wp^{\prime}(z)-\wp^{\prime}(u)}{\wp(z)-\wp(u)}\right)^{2}.
\end{equation*}
For $\wp(t+\alpha)$ to be real for all $t\in\Rbb$, we then only need $\alpha$ to satisfy $\wp(\alpha)\in\Rbb$ and $\wp^{\prime}(\alpha)\in\Rbb$.
But it is well known that
\begin{equation*}
    e_{1}=\wp\left(\frac{\omega_{1}}{2}\right),\quad e_{2}=\wp\left(\frac{\omega_{2}}{2}\right), \quad e_{3}=\wp\left(\frac{\omega_{1}+\omega_{2}}{2}\right),
\end{equation*}
and that
\begin{equation*}
    \wp^{\prime}\left(\frac{\omega_{1}}{2}\right)=0,\quad \wp^{\prime}\left(\frac{\omega_{2}}{2}\right)=0,\quad\text{and}\quad \wp^{\prime}\left(\frac{\omega_{1}+\omega_{2}}{2}\right)=0.
\end{equation*}
Hence, in light of (\ref{E:periodintOne}) and (\ref{E:periodintTwo}), we may simply take $\alpha=\frac{1}{2}\omega_{1}$.

This shows that the nondegenerate flow loops are parametrized by
\begin{equation*}
    \Rbb\ni t\mapsto (x(t),\xi(t)),
\end{equation*}
with
\begin{equation*}
    x(t)=\frac{1}{2}C_{0}\left(\wp\left(t+\frac{\omega_{1}}{2}\right)-\frac{1}{4}h\right)^{-1}
\end{equation*}
and
\begin{equation*}
    \xi(t)=\frac{-\wp^{\prime}\left(t+\frac{\omega_{1}}{2}\right)}{\wp\left(t+\frac{\omega_{1}}{2}\right)-\frac{1}{4}h}.
\end{equation*}

This can easily be seen in the figures. The fundamental domain of $\wp$ is pictured in Figure \ref{F:WPdomain}. As $t$ traverses the horizonal line ($a$), $(x(t),\xi(t))$ traces out the curve ($a$) in Figure \ref{F:Ellcurve}. And as $t+\frac{\omega_{1}}{2}$ traverses the horizontal line ($b$), $(x(t),\xi(t))$ traces out the curve ($b$) in Figure \ref{F:Ellcurve}. The parameters in Figure \ref{F:Ellcurve} are $h=0.1$ and $C_{0}=0.025$. Note in particular that $C_{0}^{2}<h^{3}$. When $C_{0}^{2}=h^{3}$, the loop reduces to a point, although there is still an unbounded component. And there is only an unbounded component when $C_{0}^{2}>h^{3}$.

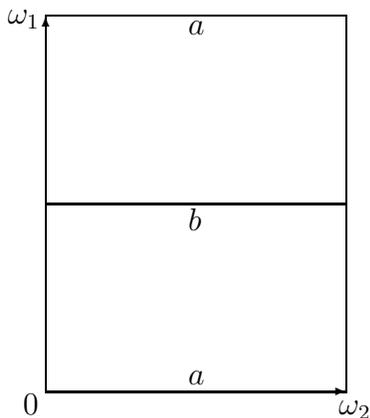
\begin{figure}
\setlength{\unitlength}{5cm}
\begin{picture}(.8,1)
\put(0,0){\vector(0,1){1}} \put(0,0){\vector(1,0){.8}}
\put(.8,0){\line(0,1){1}} \put(0,1){\line(1,0){.8}}
\put(0,.5){\line(1,0){.8}}
\put(.38,.95){$a$}
\put(.38,.43){$b$}
\put(.38,.02){$a$}
\put(.78,-.06){$\omega_{2}$}
\put(-.1,.98){$\omega_{1}$}
\put(-.06,-.06){$0$}
\end{picture}
\caption{The fundamental domain of $\wp$. When restricted to either the line ($a$) or the line ($b$), $\wp$ is real-valued.}\label{F:WPdomain}
\end{figure}

\begin{figure}
\epsfig{file=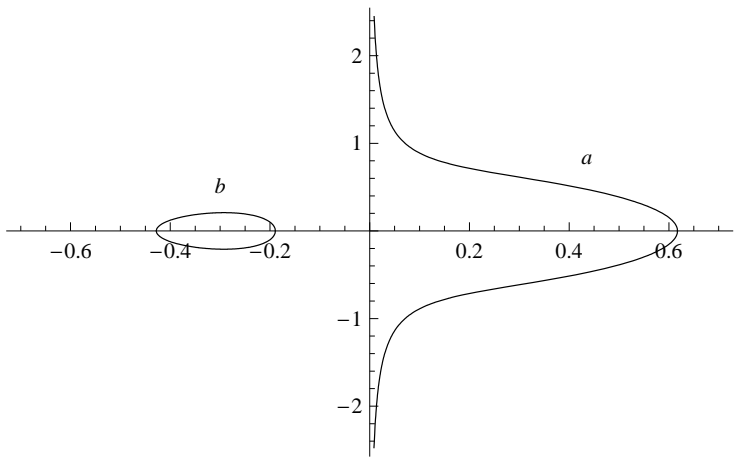,height=6cm}
\caption{The elliptic curve parametrized by $(x(t),\xi(t))$. The parameters for the curve are $h=0.1$ and $C_{0}=0.025$.}\label{F:Ellcurve}
\end{figure}


\end{document}